\theoremstyle{plain}
\newtheorem{proposition}{Proposition}
\newcounter{hypA}
\begin{document}

\bigskip

\begin{center}

{\Large \textbf{Inference for a Class of Partially Observed Point Process Models}}

\bigskip

BY JAMES S. MARTIN$^{1}$, AJAY JASRA$^{2}$ \& EMMA McCOY$^{1}$

{\footnotesize $^{1}$Department of Mathematics,
Imperial College London, London, SW7 2AZ, UK.}\\
{\footnotesize E-Mail:\,}\texttt{\emph{\footnotesize james.martin04@ic.ac.uk, e.mccoy@ic.ac.uk}}\\
{\footnotesize $^{2}$Department of Statistics \& Applied Probability,
National University of Singapore, Singapore, 117546, Sg.}\\
{\footnotesize E-Mail:\,}\texttt{\emph{\footnotesize staja@nus.edu.sg}}
\end{center}

\begin{abstract}
This paper presents a simulation-based framework for sequential
inference from partially and discretely observed point process
(PP's) models with static parameters. 
Taking on a Bayesian perspective for the static parameters,
we build upon sequential Monte Carlo (SMC) methods, investigating the
problems of performing sequential filtering and
smoothing in complex examples, where current methods often fail.
We consider various approaches
for approximating posterior distributions using SMC. Our approaches, with
some theoretical discussion
are illustrated on a doubly stochastic point process
applied in the context of
finance.\\
\begin{small}
\emph{Some Key Words}:  
Point Processes, Sequential Monte Carlo, Intensity Estimation
\end{small}
\end{abstract}

\section{Introduction}
\label{sec:intro}

Partially observed point processes provide a rich class of
models to describe real data. For example,
such models are used for stochastic
volatility (Barndorff-Nielsen \& Shephard, 2001) in finance, descriptions of queuing
data in operations research (Fearnhead, 2004), important seismological
models (Daley \& Vere-Jones, 1988) and applications in nuclear physics (Snyder
\& Miller, 1998).
For complex dynamic
models, that is, when data arrive sequentially in time, studies
date back to at least Snyder (1972). However, fitting Bayesian models
requires SMC (e.g.~Doucet et al.~(2000)) and Markov chain Monte Carlo (MCMC) methods. The
main developments in this field include the work of:
Centanni \& Minozzo (2006a,b); Green (1995); Del Moral et al.~(2006,2007); Doucet et al.~(2006);
Roberts et al.~(2004), Rydberg \& Shephard (2000), see also Whiteley et al.~(2011). As we describe below, the SMC methodology may fail in some scenarios
and we will describe methodology to deal with the problems that will be outlined.

Informally, the problem of interest is as follows. A process is
observed discretely upon a given time-interval $[0,T]$. The
objective is to draw inference at time-points
$t_0=0<t_1<\cdots<t_{\widetilde{m}}<T=t_{\widetilde{m}+1}$, on the
unobserved marked PP $(k_{t_n},\phi_{1:k_{t_n}},
\zeta_{1:k_{t_n}})$, where
$\phi_{1:k_{t_n}}=(\phi_1,\dots,\phi_{k_{t_n}})$ are the ordered
event times (constrained to $[0,t_n]$) and
$\zeta_{1:k_{t_n}}=(\zeta_1,\dots,\zeta_{k_{t_n}})$ are marks,
given the observations $y_{1:r_{t_n}}$. 
In other words to compute, for $n\geq 1$, at time $t_n$ 
\begin{align}
\pi_n(k_{t_n},\phi_{1:k_{t_n}},
\zeta_{1:k_{t_n}}|y_{1:r_{t_n}}) &~~
\textrm{smoothing}\label{eq:intro:filtering}
\\
\pi_n(k_{t_n}-k_{t_{n-1}},\phi_{k_{t_{n-1}}+1:k_{t_n}},
\zeta_{k_{t_{n-1}}+1:k_{t_n}}|y_{1:r_{t_n}})\label{eq:intro:smoothing}
 &~~ \textrm{filtering}.
\end{align}
In addition, there are static parameters 
specifying the probability model and these parameters will be
estimated in a Bayesian manner. At this stage a convention in our terminology
is established. An algorithm is said to be \emph{sequential} if it
is able to process data as it arrives over time. An algorithm is
said to be \emph{on-line} if it is sequential and has
a fixed computational cost per iteration/time-step. 

One of the first works applying computational methods to
PP models, was 
Rydberg \& Shephard (2000). They focus upon
a Cox model where the unobserved PP parameterizes the intensity of the observations. 
Rydberg \& Shephard (2000) used the auxiliary particle filter
(Pitt \& Shephard, 1997) to simulate from the posterior density of the
intensity at a given time point. This
was superseded by Centanni \& Minozzo (2006a,b), which
allows one to infer the intensity at any given time, up to the current
observation. Centanni \& Minozzo (2006a,b) perform an MCMC-type filtering
algorithm, estimating static
parameters using stochastic EM. The methodology cannot easily be adapted to the case where the
static parameters are given a prior distribution. 
In addition, the theoretical validity of the approach
has not been established; this is verified in Proposition \ref{prop:cmjustification} of this
article.

SMC samplers (Del Moral et al.~2006) are the focus of this paper and can be applied to all the problems
stated above. SMC methods simulate a set of $N\geq 1$
weighted samples, termed particles, in order to approximate a
sequence of distributions, which may be
chosen by the user, but which include (or closely related to)
the distributions in \eqref{eq:intro:filtering} and
\eqref{eq:intro:smoothing}. Such methods are provably convergent as $N\rightarrow\infty$ (Del Moral, 2004). A key feature of the approach is that the user must select:
\begin{enumerate}
\item{the sequence of distributions \label{enum:point1}}
\item{the mechanism by which particles are propagated \label{enum:point2}.}
\end{enumerate}
If points 1.~and 2.~are not properly addressed, there can be a substantial discrepancy
between the proposal and target; thus the variance of the
weights will be large and estimation 
inaccurate. This issue is particularly relevant when the targets
are defined on a sequence of nested spaces, as is the case for the
PP models -- the space of the point process trajectories becomes
larger with the time-parameter $n$. Thus, in choosing the sequence
of target distributions, we are faced with the question of how much the space should be enlarged at each iteration of the SMC
algorithm and how to choose a mechanism to propose
particles in the new region of the space.
This issue is referred to as \emph{the difficulty of extending the space}. 

Two solutions are proposed. The first is
to \emph{saturate} the state-space; it is supposed that the
observation interval, $[0,T]$, of the PP is known \emph{a priori}.
The sequence of target distributions is then defined on the whole interval and one
sequentially introduces likelihood terms. This idea circumvents the problem of
extending the space, at an extra computational cost. Inference for the original density of interest can be achieved by
importance sampling (IS). This approach 
cannot be used if $T$ is unkown. 
In the second approach, entitled \emph{data-point tempering}, the
sequence of target distributions are defined by sequentially
introducing likelihood terms.
This is achieved as follows: given that the PP has been sampled on $[0,t_n]$ the target is extended onto
$[0,t_{n+1}]$ by sampling the missing part of the PP. Then one introduces likelihood terms into the
target that correspond to the data (as in Chopin (2002)). Once all of the data have been introduced, the target
density is (\ref{eq:intro:filtering}). It should be noted that neither of the methods
are online, but some simple fixes are detailed.

Section \ref{sec:motivate} introduces a doubly stochastic PP model from finance
which serves as a running example. In Section \ref{sec:previous}
the ideas of Centanni \& Minozzo (2006a,b) are discussed; it is established that
the method is theoretically valid under some assumptions. The
difficulty of extending the state space is also demonstrated. In
Section \ref{sec:lpf} we introduce our SMC methods. In Section \ref{sec:financefinal} our
methods are illustrated on the running example. In Section
\ref{sec:summ} we detail extensions to our
work.

Some
notations are introduced. We consider a
sequence of probability measures $\{\varpi_n\}_{1\leq n \leq m^*}$ on spaces $\{(G_n,\mathcal{G}_n)\}_{1\leq n\leq m^*}$,
with dominating $\sigma-$finite measures.
 Bounded and measurable functions on $G_n$,
$f_n:G_n\rightarrow\mathbb{R}$, are written $\mathcal{B}_b(G_n)$ and
$\|f_n\|=\sup_{x\in G_n}|f_n(x)|$. $\varpi_n$ will refer to either the probability measure
$\varpi_n(dx)$ or density $\varpi_n(x)$.

\section{Model}\label{sec:motivate}

The model we use to illustrate our ideas is from statistical finance. An important
type of financial data is ultra high frequency
data which consists of the irregularly spaced times of financial
transactions and their corresponding monetary value.
Standard models for the fitting of such data have relied upon
stochastic differential equations driven by Wiener dynamics; a
debatable assumption due to the continuity of the sample paths. As
noted in Centanni \& Minozzo (2006b), it is more appropriate to model the data as a Cox process. Due to the high frequency of the
data, it is important to be able to perform sequential/on-line inference.
Data are observed in $[0,T]$. In the context of finance, the assumption
that $T$ be fixed is entirely reasonable. For example, when the model
is used in the context of equities, the model is run for the trading day; indeed due to different (deterministic) patterns in financial trading, it is likely that the fixed parameters below are varied according to the day.

A marked PP, of $r_T\geq 1$ points, is observed in 
time-period $[0,T]$. This is written
$y_{1:r_T}=(\omega_{1:r_T},\xi_{1:r_T})\in\Omega_{r,T}\times\Xi^{r_T}$
with
$\Omega_{r,T}=\{\omega_{1:r_T}:0<\omega_1<\cdots<\omega_{r_T}<T\}$,
$\Xi\subseteq\mathbb{R}$. Here the $\omega$ are the transaction
times and $\xi$ are the log-returns on the financial transactions.
An appropriate model for such data, as in Centanni \& Minozzo (2006b), is
\begin{eqnarray*}
\tilde{p}(\xi_{1:r_T}|\mu,\sigma) & = & \prod_{i=1}^{r_T} \tilde{p}(\xi_i;\mu,\sigma)\\
\tilde{p}(\omega_{1:r_T}|\{\lambda_{T}\}) & \propto &
\prod_{i=1}^{r_T}\big\{\lambda_{\omega_i}\big\}\exp\bigg\{-\int_{0}^T\lambda_u
du\bigg\}
\end{eqnarray*}
with $\tilde{p}$ a generic density,  $\xi_i|\mu,\sigma$ are assumed to be $t$-distributed
on 1 degree of freedom, location $\mu$, scale $\sigma$ and
$\lambda_u$ is the intensity. 
The unobserved intensity process is assumed to follow
the dynamics
$d\lambda_t  =  -s \lambda_t dt + dJ_t$
with $\{J_t\}$ a compound Poisson process:
$
J_t = \sum_{j=1}^{k_t} \zeta_j
$
with $\{K_t\}$ a Poisson process with rate parameter $\nu$ and
i.i.d. jumps $\zeta_j\sim\mathcal{E}x(1/\gamma)$,
$\mathcal{E}x(\cdot)$ is the exponential distribution. That is,
for $t\in [0,T]$,
\begin{equation}
\lambda_t = \bigg\{\lambda_0e^{-st} +
\sum_{j=1}^{k_t}\zeta_j
e^{-s(t-\phi_j)}\bigg\}\label{eq:intensity_def}
\end{equation}
with $\phi_j$ the jump times of the unobserved Poisson process and $\lambda_0$ fixed throughout (using a short preliminary time series that is available in practice).

We define the following notation:
\begin{align*}
\bar{x}_n &= (k_{t_n}, \phi_{1:k_{t_n}}, \zeta_{1:k_{t_n}}),\\
\bar{x}_{n,1}&=(k_{t_{n}}-k_{t_{n-1}}, \phi_{k_{t_{n-1}}+1:k_{t_n}}, \zeta_{k_{t_{n-1}}+1:k_{t_n}}),\\
\bar{y}_n &= (\omega_{1:r_{t_n}}, \xi_{1:r_{t_n}}),\\
\bar{y}_{n,1}& = (\omega_{r_{t_{n-1}}+1:r_{t_n}},
\xi_{r_{t_{n-1}}+1:r_{t_n}}).
\end{align*}
Here $\bar{x}_n$ (respectively $\bar{y}_n$) is
the the restriction of the hidden (observed) PP to 
events in $[0,t_n]$. Similarly $\bar{x}_{n,1}$ (respectively
$\bar{y}_{n,1}$) is the the restriction of the hidden
(observed) PP to events in $[t_{n-1}, t_n]$.

The objective is to perform inference at times
$0<t_1<\cdots<t_{\widetilde{m}}<T=t_{\widetilde{m}+1}$, that is,
to update the posterior distribution conditional on the data
arriving in $[t_{n-1},t_n]$. To summarize, the posterior
distribution at time $t_n$ is
$$
\pi_n(\bar{x}_n,\mu,\sigma|\bar{y}_n)
\propto
\prod_{i=1}^{r_{t_n}}\big\{\tilde{p}(\xi_i;\mu,\sigma)\lambda_{\omega_i}\big\}\exp\bigg\{-\int_{0}^{t_n}\lambda_u
du\bigg\}\times \prod_{i=1}^{k_{t_n}}\big\{\mathsf{p}(\zeta_i)\big\}
\mathsf{p}(\phi_{1:k_{t_n}})\mathsf{p}(k_{t_n})\times \tilde{p}(\mu,\sigma)
$$
\begin{equation}
= l_{[0,{t_n}]}(\bar{y}_n;\bar{x}_n,\mu,\sigma)\times\\
\mathsf{p}(\bar{x}_n)\times
\tilde{p}(\mu,\sigma)\label{eq:posterior}
\end{equation}
with $l_{[0,{t_n}]}$ corresponding to the first part of the equation above,
$ \mu \sim \mathcal{N}(\alpha_{\mu},\beta_{\mu})$, 
$\sigma \sim \mathcal{G}a(\alpha_{\sigma},\beta_{\sigma})$,
$\phi_{1:k_t}|k_t \sim \mathcal{U}_{\Phi_{k,t_n}}$, 
$k_t \sim \mathcal{P}o(\gamma t)$ and
where $\mathcal{U}_A$ is the uniform distribution on the set $A$,
$\mathcal{N}(\mu,\sigma^2)$ is the normal distribution of mean
$\mu$ and variance $\sigma^2$, $\mathcal{G}a(\alpha,\beta)$ the Gamma
distribution of mean $\alpha/\beta$ and $\mathcal{P}o$ is the Poisson
distribution. $\mathsf{p}(\bar{x}_n)$ is the notation for the prior on the marked point-process and $\tilde{p}(\mu,\sigma)$
is the notation for the prior on $(\mu,\sigma)$.
Later a $\pi_0$ is introduced which will refer to an initial
distribution. Note it is possible to perform inference on
$(\mu,\sigma)$ independently of the unobserved PP; it will not
significantly complicate the simulation methods to include them.

It is of interest to compute expectations w.r.t.~the
$\{\pi_n\}_{1\leq n\leq m^*}$, and this is possible, using the SMC
methods below (Section \ref{sec:smcmethods}). However, such
algorithms are not of fixed computational cost; the sequence of
spaces over which the $\{\pi_n\}_{1\leq n\leq m^*}$ lie is increasing.
These methods can also be used to draw inference from the marginal posterior of the process, over $(t_{n-1},t_n]$; such algorithms can be designed to be of fixed computational complexity, for example by constraining any simulation to a fixed-size state-space. This idea is considered further in Section \ref{sec:online_disc}.

% The methods can be used to draw inference from the marginal
% posterior of the process, on $(t_{n-1},t_n]$ using an algorithm
% of fixed computational complexity; for example by constraining any
% simulation to a fixed-size state-space; see Section \ref{sec:online_disc}. 

% Note that our the methods in Section \ref{sec:lpf} are specified so that %they are not of fixed computational cost, but can be modified to be so. %For example, as in \cite{dmdj1}, where the last $\widetilde{m}>1$ jumps %(and marks)
% are only re-simulated.

\section{Previous Approaches}\label{sec:previous}

%\subsection{Reversible Jump Method of Centanni \& Minozzo}

% \NW{I have modified the notation here in order to make it
% consistent with that elsewhere, although I haven't touched the
% notation in the proof. In order for the proof to be consistent
% with the above notation I think we have the following map from the
% old notation to the new.}
%
% \begin{align*}
% \tilde{x}_n &\rightarrow \bar{x}_{n,1}, \\
% \zeta_{t_n}&\rightarrow \bar{y}_n
% \end{align*}
% \NW{and in a few places in the proof, $y$ is used, but $y$ now
% refers to observations, so that needs to be changed.}

One of the approaches for performing filtering for partially
observed PP's is from Centanni \& Minozzo (2006a). In this Section the parameters
$(\mu,\sigma)$ are assumed known. Let
$$
\bar{E}_n =
\bigcup_{k\in\mathbb{N}_0}\bigg(\{k\}\times\Phi_{k,t_n}\times(\mathbb{R}^+)^{k}\bigg).
$$
This is the support of the target densities for this method.

%; that is, the stochastic EM element of the algorithm is not
%considered.

%Let $\bar{k}_{t_n}$ and $\bar{r}_n$ be respectively the number of
%jumps of the latent process and observed jumps/marks in
%$(t_{n-1},t_n]$.

The following decomposition is adopted
\begin{eqnarray}
\pi_n(\bar{x}_n|\bar{y}_n) & = & \frac{l_{(t_{n-1},
t_n]}(\bar{y}_{n,1};\bar{x}_n) }{p_n(\bar{y}_{n,1}|\bar{y}_{n-1})}
\mathsf{p}(\bar{x}_{n,1})
\pi_{n-1}(\bar{x}_{n-1}|\bar{y}_{n-1})\\
\label{eq:centrec} \tilde{p}_n(\bar{y}_{n,1}|\bar{y}_{n-1}) & = & \int
l_{(t_{n-1}, t_n]}(\bar{y}_{n,1};\bar{x}_n)
\mathsf{p}(\bar{x}_{n,1})
\pi_{n-1}(\bar{x}_{n-1}|\bar{y}_{n-1})d\bar{x}_n. \nonumber
\end{eqnarray}
At time $n\geq 2$ of the algorithm, a reversible jump MCMC kernel
(although the analysis below is not restricted to such scenarios)
is used for $N$ steps to sample from the approximated density
\begin{eqnarray*}
\pi_n^N(\bar{x}_n|\bar{y}_n) & \propto & l_{(t_{n-1},
t_n]}(\bar{y}_{n,1};\bar{x}_n) \mathsf{p}(\bar{x}_{n,1}) S_{x,
n-1}^{N}(\bar{x}_{n-1})
\end{eqnarray*}
where
$S_{x, n-1}^N(\bar{x}_{n-1}) := \frac{1}{N}\sum_{i=1}^N\mathbb{I}_{\{\bar{X}_{n-1}^{(i)}\}}(\bar{x}_{n-1})$
with $\bar{X}_{n-1}^{(1)},\dots,\bar{X}_{n-1}^{(N)}$ obtained from
a reversible jump MCMC algorithm of invariant measure
$\pi_{n-1}^N$. The algorithm for $n=1$ targets $\pi_1$ exactly;
there is no empirical density $S_{x, 0}^N$. At time $n=1$ the
algorithm starts from an arbitrary point 
$\bar{x}_1^{(1)}\in\bar{E}_1$ and subsequent steps are initialized
by a draw from the empirical $S_{x,n-1}^N$ and the prior
$\mathsf{p}$ (this can be modified); $N-1$ additional samples are
simulated.

The above algorithm can be justified, theoretically, by using the
Poisson equation (e.g.~Glynn \& Meyn (1996)) and induction arguments. Below
the assumption (A) is made; see the appendix for the assumption (A) as well
as the proof. Also, the expectation below is w.r.t.~the process
discussed above, given the observed data.

\begin{proposition}\label{prop:cmjustification}
Assume (A). Then for any $n\geq 1$, $\bar{y}_n$, $p\geq 1$ there
exists $B_{p,n}(\bar{y}_n)<+\infty$ such that for any
$f_n\in\mathcal{B}_b(\bar{E}_n)$ \begin{equation}
\mathbb{E}_{\bar{x}_{1}^{(1)}}\bigg[\bigg|\frac{1}{N}\sum_{i=1}^Nf_n(\bar{X}_n^{(i)})-\int_{\bar{E}_n}f_n(\bar{x}_n)\pi_n(d\bar{x}_{n})\bigg|^p
\bigg|\bar{y}_n\bigg]^{1/p}
\leq \frac{B_{p,n}(\bar{y}_n)\|f_n\|}{\sqrt{N}}.
\label{eq:lpbound}
\end{equation}
\end{proposition}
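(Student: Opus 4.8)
The plan is to argue by induction on $n$, establishing the $\mathbb{L}_p$ bound \eqref{eq:lpbound} at each stage. The base case $n=1$ is a standard result on ergodic averages for a Markov chain (here the reversible jump MCMC kernel) that is geometrically ergodic with invariant distribution $\pi_1$: under assumption (A), which I expect to supply the requisite minorization/drift conditions (or a uniform ergodicity statement) on $\bar E_1$, one obtains $\mathbb{L}_p$ convergence of $\frac1N\sum_i f_1(\bar X_1^{(i)})$ to $\pi_1(f_1)$ at rate $N^{-1/2}$ with a constant $B_{p,1}(\bar y_1)$ depending on the data through the kernel. This is classical and I would cite the Poisson-equation machinery of Glynn \& Meyn (1996): write $f_1-\pi_1(f_1)=\hat f_1-P_1\hat f_1$ for the solution $\hat f_1$ of the Poisson equation, decompose the ergodic average into a martingale term plus lower-order boundary terms, and apply Burkholder's inequality to the martingale part to get the $N^{-1/2}$ rate.

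For the inductive step, suppose \eqref{eq:lpbound} holds at $n-1$. The chain at time $n$ has invariant law $\pi_n^N$, which differs from the desired $\pi_n$ only through the presence of the random empirical measure $S_{x,n-1}^N$ in place of $\pi_{n-1}$ in the decomposition \eqref{eq:centrec}. I would split the error as
\begin{align*}
\frac1N\sum_{i=1}^N f_n(\bar X_n^{(i)}) - \pi_n(f_n)
&= \Big[\frac1N\sum_{i=1}^N f_n(\bar X_n^{(i)}) - \pi_n^N(f_n)\Big]
 + \big[\pi_n^N(f_n) - \pi_n(f_n)\big].
\end{align*}
The first bracket is controlled, conditionally on $S_{x,n-1}^N$, exactly as in the base case: it is an ergodic-average error for a chain targeting $\pi_n^N$, so assumption (A) (applied uniformly over the possible realizations of the empirical measure, which is the delicate point) gives a bound $B_{p,n}'(\bar y_n)\|f_n\|/\sqrt N$. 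For the second bracket, I would use the explicit form: $\pi_n^N$ and $\pi_n$ are obtained by the same reweighting-and-extension map applied to $S_{x,n-1}^N$ and $\pi_{n-1}$ respectively, so $|\pi_n^N(f_n)-\pi_n(f_n)|$ is bounded by a constant (depending on $\bar y_n$ through the likelihood ratio $l_{(t_{n-1},t_n]}/\tilde p_n$ and its normalization) times $|S_{x,n-1}^N(g)-\pi_{n-1}(g)|$ for a collection of test functions $g$ built from $f_n$ and the likelihood; these $g$ are bounded since $f_n\in\mathcal B_b$ and (under (A)) the likelihood ratios are bounded. Taking $\mathbb{L}_p$ norms and invoking the induction hypothesis on each such term closes the recursion, with $B_{p,n}(\bar y_n)$ defined recursively in terms of $B_{p,n-1}$, the likelihood bounds, and the ergodicity constants.

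The main obstacle, I expect, is the handling of the first bracket: the MCMC kernel at time $n$ depends on the \emph{random} measure $S_{x,n-1}^N$, so one cannot simply quote a fixed-kernel ergodic theorem. One needs the constant in the ergodic-average bound to be uniform over the (data-dependent but otherwise arbitrary) empirical measures that can arise — equivalently, assumption (A) must be strong enough (e.g.\ a uniform-in-the-starting-measure minorization on $\bar E_n$, or a uniformly-bounded likelihood ratio ensuring the family of targets $\{\pi_n^N\}$ is uniformly ergodic) that the Poisson-equation solution $\hat f_n$ and the associated constants are bounded uniformly. A secondary technical point is that the chain at time $n$ is initialized from a draw from $S_{x,n-1}^N$ composed with the prior $\mathsf p$, so the transient/initialization term in the Glynn--Meyn decomposition must also be bounded uniformly over these random initial distributions; this again should follow from (A). Once the uniformity is in hand, the conditioning argument (condition on $\mathcal F_{n-1}$, apply the uniform bound, then take expectations) and the triangle inequality in $\mathbb{L}_p$ make the rest routine.
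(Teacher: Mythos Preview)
Your proposal is correct and follows essentially the same route as the paper: induction on $n$, with the Poisson-equation/B\"urkholder argument for a uniformly ergodic chain at each level (the uniformity over the random empirical measure being precisely what assumption (A) supplies), and the error at level $n-1$ feeding in via the induction hypothesis. The only cosmetic difference is the intermediate term in the inductive split: the paper centres at the \emph{unnormalized} quantity $S_{\bar x,n-1}^N(I(f_n g_n))$ (with $g_n$ including the true normalizer $p_n$), so that the second bracket is directly $[S_{\bar x,n-1}^N-\pi_{n-1}](I(f_n g_n))$ on a single bounded test function, whereas your split at $\pi_n^N(f_n)$ produces a ratio in the second bracket; this is harmless once you write $\pi_n^N(f_n)-\pi_n(f_n)=\pi_n^N(f_n)\,[\pi_{n-1}-S^N](I(g_n))+[S^N-\pi_{n-1}](I(f_n g_n))$ and use $|\pi_n^N(f_n)|\le\|f_n\|$, but the paper's choice sidesteps the random normalizer entirely.
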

This result helps to establish the theoretical validity of the
method in Centanni \& Minozzo (2006a), which to our knowledge, had not been
established in that paper or elsewhere. In addition, it allows us
to understand where and when the method may be of use; this is
discussed in Section \ref{sec:smccm_comparison}.

\subsection{SMC Methods}\label{sec:smcmethods}

SMC samplers aim to approximate a sequence of related probability measures
$\{\pi_n\}_{0\leq n \leq m^*}$ defined upon a common space
$(E,\mathcal{E})$. Note that $m^*>1$ can depend upon the data and
may not be known prior to simulation. For partially observed PPs the
probability measures are defined upon nested state-spaces: this
case can be similarly handled with minor modification.
SMC samplers introduces a sequence of
auxiliary probability measures $\{\widetilde{\pi}_n\}_{0\leq n
\leq m^*}$ on state-spaces of increasing dimension
$(E_{[0,n]}:=E_0\times\cdots\times
E_n,\mathcal{E}_{[0,n]}:=\mathcal{E}_0\otimes\cdots\otimes\mathcal{E}_n)$,
such that they admit the $\{\pi_n\}_{0\leq n\leq m^*}$ as
marginals.

The following sequence of auxiliary densities is used:
\begin{equation}
\widetilde{\pi}_n(x_{0:n}) = \pi_n(x_n)\prod_{j=0}^{n-1}L_{j}(x_{j+1},x_j)
\label{eq:smcsampaux}
\end{equation}
where $\{L_{n}\}_{0\leq n \leq m^*-1}$ are
backward Markov kernels. In our application $\pi_0$ is the
prior, on $E_1$ (as defined below). It is clear that
(\ref{eq:smcsampaux}) admit the $\{\pi_n\}$ as marginals, and
hence these distributions can be targeted using precisely the same
mechanism as in sequential importance sampling/resampling; the algorithm is given in Figure
\ref{fig:smcsampleralgo}. 

The ESS in Figure \ref{fig:smcsampleralgo} refers to the effective
sample size (Liu, 2001). This measures the weight degeneracy of
the algorithm; if the ESS is close to $N$, then this indicates
that all of the samples are approximately independent. This is a
standard metric by which to assess the performance the algorithm.
The resampling method used throughout the paper is systematic
resampling.

One generic approach is to set $K_n$ as an MCMC kernel of invariant distribution
$\pi_n$ and $L_{n-1}$ as the reversal kernel
$
L_{n-1}(x_n,x_{n-1}) =
\pi_n(x_{n-1})K_n(x_{n-1},x_n)/\pi_n(x_n)
$
which we term the \emph{standard reversal kernel}.
One can iterate the MCMC kernels, by which we use the positive integer $M$ to denote the number of iterates.
It is also possible to apply the algorithm when $K_n$ is a mixture of kernels;
see Del Moral et al.~(2006) for details on the algorithm.

\begin{figure}[h]
\begin{itemize}
\item{0. Set $n=0$; for $i=1,\dots,N$ sample $X_0^{(i)}\sim\varrho_0$ and compute $w_0(X^{(i)}_0)\propto\pi_0(X^{(i)}_0)/\varrho_0(X^{(i)}_0)$. }
\item{1.Compute the normalized importance weights,
$$
w_n^{(i)} = \frac{w_{n}(X_{0:n}^{(i)})}{\sum_{j=1}^N
w_{n}(X_{0:n}^{(j)})}
$$
if the
$\textrm{ESS}=\big\{\sum_{j=1}^Nw_n(X_{0:n}^{(j)})\big\}^2/\sum_{j=1}^N\{w_n(X_{0:n}^{(j)})\}^2<T(N)$
then resample the particles and set the importance weights to
uniform. Set $n=n+1$, if $n=m^*+1$ stop.}
\item{2. For $i=1,\dots,N$ sample $X_n^{(i)}|X_{n-1}^{(i)}=x_{n-1}^{(i)}\sim K_n(x_{n-1}^{(i)},\cdot)$,
and compute:
\begin{equation}
W_n(X_{n-1:n}^{(i)}) \propto
\frac{\pi_n(X_n^{(i)})L_{n-1}(X_{n-1}^{(i)},X_n^{(i)})}
{\pi_{n-1}(X_{n-1}^{(i)})K_n(X_{n-1}^{(i)},X_n^{(i)})}\label{eq:incrementalweight}
\end{equation}
$w_{n}(X^{(i)}_{0:n})=W_n(X^{(i)}_{n-1:n})w_{n-1}(X_{0:n-1}^{(i)})$
and return to the start of 1.}
\end{itemize}
\caption{A Generic SMC Sampler. Note that $T(N)$ is termed a threshold function
such that $1\leq T(N) \leq N$ and ESS is the effective sample size.}
\label{fig:smcsampleralgo}
\end{figure}

\subsubsection{Nested Spaces}\label{sec:nested:spaces}
As described in Section \ref{sec:intro}, in complex
problems it is often difficult to design efficient SMC algorithms.
In the example in Section \ref{sec:motivate}, the state-spaces of
the subsequent densities are not common. The objective is to
sample from a sequence of densities on the space, at time $n$,
$$
E_n = \bigg(\bigcup_{k\in\mathbb{N}_0}\{k\}\times\Phi_{k,t_n}\times(\mathbb{R}^+)^{k}\bigg)
\times\mathbb{R}\times\mathbb{R}^+\quad 1\leq n\leq m^*-1 %= \bar{E}_n \times \Theta 
$$
with $E_0=E_1$.
That is, for any $1\leq n\leq m^*-1$, $E_{n}\subseteq E_{n+1}$.
Two standard methods for extending the space, as in Del Moral et al.~(2006)
are to propagate particles by application of  `birth' and the
`extend' moves.

Consider the model in Section \ref{sec:motivate}. The following SMC steps are used to extend the space at time $n$ of the algorithm.
\begin{itemize}
\item{\textbf{Birth}. A new jump is sampled
uniformly in $[\phi_{k_{t_{n-1}}},t_n]$ and a new mark from the prior.
The incremental weight is
$$
W_n(\bar{x}_{n-1:n},\mu,\sigma) \propto
\frac{\pi_n(\bar{x}_n,\mu,\sigma|\bar{y}_n)(t_n-\phi_{k_{t_{n-1}}})}
{\pi_{n-1}(\bar{x}_{n-1},\mu,\sigma|\bar{y}_n)\mathsf{p}(\zeta_{k_{t_n}})}.
$$
}
\item{\textbf{Extend}. A new jump is generated
according to a Markov kernel that corresponds to the random walk:
$$
\log\bigg\{\frac{\phi_{k_{t_n}} - \phi_{k_{t_n}-1}}{t_n-\phi_{k_{t_n}}}\bigg\} =
\vartheta Z + \log\bigg\{\frac{\phi_{k_{t_{n-1}}} - \phi_{k_{t_{n-1}}-1}}{t_n-\phi_{k_{t_{n-1}}}}\bigg\}
$$
with $Z\sim\mathcal{N}(0,1)$, $\vartheta>0$. The new mark is sampled from the prior.
The backward kernel and incremental weight are discussed in Del Moral et al.~(2007) section 4.3.}
\end{itemize}
Note, as remarked in Whiteley et al.~(2011), we need to be able to sample
any number of births. With an
extremely small probability, a proposal from the prior is
included to form a mixture kernel.

In addition to the above steps an MCMC sweep is included after the decision of whether or not to resample
the particles is taken (see step 1.~of Figure \ref{fig:smcsampleralgo}): an MCMC kernel of invariant measure $\pi_n$ is applied.
The kernel is much the same as in Green (1995).

\subsubsection{Simulation Experiment}\label{sec:simosfinance1}

We applied the benchmark sampler, as detailed above, to some synthetic data in order to monitor the performance of the algorithm. Standard practice in the reporting of financial data is to represent the time of a trade as a positive real number, with the integer part representing the number of days passed since January $1^{\mbox{st}}$ 1900 and the non-integer part representing the fraction of 24 hours that has passed during that day; thus, one minute corresponds to an interval of length $1/1440$. Therefore we use a synthetic data set with intensity of order of magnitude $10^3$. The ticks $\omega_i$ were generated from a specified intensity process $\left\{\lambda_t\right\}$ that varied smoothly between three levels of constant intensity at $\lambda=6000$, $\lambda=2000$ and $\lambda=4000$. The log returns $\xi_i$ were sampled from the Cauchy-distribution, location $\mu=0$ and scale $\sigma=2.5\times10^{-4}$. The entire data set was of size $r_T=3206$, $[0,T]=[0,0.9]$ with $t_n= n*0.003$. The intensity from which they were generated had constant levels at 6000 in the interval [0.05,0.18]; at 4000 in the interval [0.51,0.68]; and at 2000 in the intervals [0.28,0.42] and [0.78,0.90].

The sampler was implemented with all permutations $\left\{\left(M,N\right)\right\}$ for $N\in\left\{100,1000\right\}$ and $M\in\left\{1,5,20\right\}$, resampling whenever the effective sample size fell below $N/2$ (recall $N$ is the number of particles and $M$ the MCMC iterations). 
When performing statistical inference,
the intensity \eqref{eq:intensity_def} used parameters $\gamma=0.001$, $\nu=150$ and $s=20$.

It was found that for this SMC sampler, the system consistently collapses to a single particle representation of the distribution of interest within an extremely short time period.  That is, resampling is needed at almost every time step, which leads to an extremely poor representation of the target density. Figure \ref{ESS1} shows the ESS at each time step for a paricular implementation. As can be seen, the algorithm behaves extremely poorly for this model.

\begin{figure}[h!]
\begin{center}
\includegraphics[width=\textwidth,height=8cm]{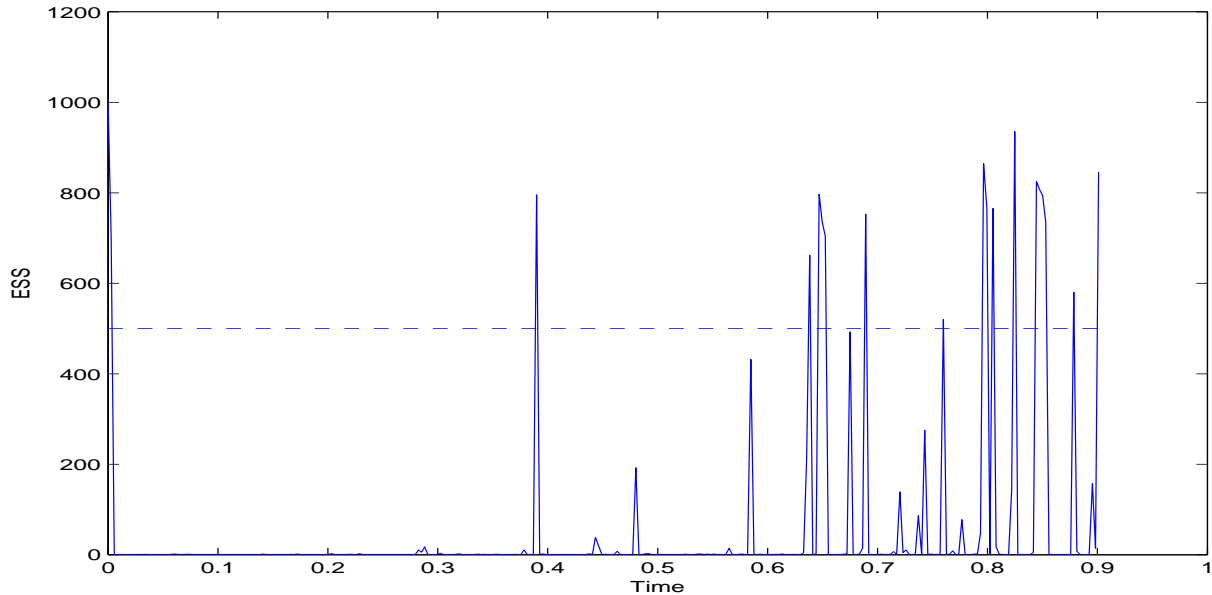}
\caption{Effective Sample Size plots for the SMC sampler described in Figure \ref{fig:smcsampleralgo}, implemented with $N=1000$ particles and with $M=5$ MCMC sweeps at each iteration. The dashed line indicates the resampling threshold at $N/2=500$ particles; resampling is needed at 94.4\% of the time steps.}
\label{ESS1}
\end{center}
\end{figure}

\subsection{Discussion}\label{sec:smccm_comparison}

We have reviewed two existing techniques for the Bayesian analysis
of partially observed PP's. It should be noted that there are
other methods, for example in Varini (2007). In that paper, the
intensity has a finite number of functional forms and the
uncertainty is related to the type of form at each inference time
$t_n$.

The relative advantage of the approach of Centanni \& Minozzo (2006a) is
the fact that the state-space need not be extended.
On page 1586 of Centanni \& Minozzo (2006a) the authors describe the filtering/smoothing algorithm, for
the process on the entire interval $[0,t_n]$ at time $n$; the theory discussed
in Proposition \ref{prop:cmjustification} suggests that this method is not
likely to work well as $n$ grows. The bound, which is perhaps a little loose
is, for $n\geq 2$
$$
B_{p,n}(\bar{y}_n) = \frac{2}{\epsilon_n(\bar{y}_n)}[B_p + 1] +
\hat{k}_n B_{p,n-1}(\bar{y}_{n-1})
$$
% \NW{notation in this section to be updated $\zeta_{t_n}\rightarrow
% \bar{y}_n$ etc.}
 with
$B_{p,1}(\bar{y}_1)=\frac{2}{\epsilon_1(\bar{y}_1)}[B_p + 1]$,
$B_p$ a constant related to the B\"urkholder/Davis inequalities
(e.g.~Shiryaev (1996)), $\epsilon_n(\bar{y}_n)\in(0,1)$ and
$\hat{k}_n>0$ a constant that is model/data dependent which is
possibly bigger than 1. The bound indicates that the error can increase over time, even under the exceptionally strong
assumption (A) in the appendix. This is opposed to SMC methods
which are provably stable, under similar assumptions (and that the entire
state is updated), as
$n\rightarrow\infty$ (Del Moral, 2004). In other words, whilst the
approach of Centanni \& Minozzo is useful in difficult problems,
it is less general with potentially slower convergence rate than SMC.
Intuitively, it seems that the method of Centanni \& Minozzo (2006a) is perhaps
only useful when considering the process on
$(t_{n-1},t_n]$, as the process further back in time is not
rejuvenated in any way.  As a result,
parameter estimation may not be very accurate. In addition, the
method cannot be extended to a sequential algorithm such that
fully Bayesian inference is possible. As noted above, SMC samplers
can be used in such contexts, but requires a computational budget
that grows with the time parameter $n$.

%\subsubsection{Instability of SMC Sampler}\label{sec:instability}
As mentioned above, SMC methods are provably stable under some conditions as the time parameter grows. However, some remarks related to the method in Figure \ref{fig:smcsampleralgo} can help to shed some light on the poor behaviour in Section \ref{sec:simosfinance1}. Consider the scenario when one is interested in statistical inference on $[0,t_1]$. Suppose for simplicity, one can write the posterior on this region as
\begin{equation}
\pi(\bar{x}_1) \propto \exp\{\sum_{i=1}^{r_1} g_i(y_i;\bar{x}_1)\}\mathsf{p}(\bar{x}_1)\label{eq:stability_disc}
\end{equation}
for fixed $r_1,\mu,\sigma$. 
If one considers just pure importance sampling, then conditioning upon the data, one can easily  show that for any $\pi_1-$(square) integrable $f$
with $\int f(\bar{x}_1)\mathsf{p}(\bar{x}_1)d\bar{x}_1=0$, the asymptotic variance in the associated Gaussian central limit theorem
is lower-bounded by:
$$
\Big(\int f(\bar{x}_1)^2\exp\{2\sum_{i=1}^{r_1} g_i(y_i;\bar{x}_1)\} \mathsf{p}(\bar{x}_1) d\bar{x}_1\Big)/
\Big(\int \exp\{\sum_{i=1}^{r_1} g_i(y_i;\bar{x}_1)\} \mathsf{p}(\bar{x}_1) d\bar{x}_1\Big).
$$
Then, for any mixing type sequence of data the asymptotic variance will for some $f$ and in \emph{some} scenarios, grow without bound as $r_1$ grows - this is a very heuristic observation, that requires further investigation.
Hence, given this discussion and our empirical experience, it seems that we require a new methodology, especially for complex problems.

\subsection{Possible Solutions to the problems of Extending the State-Space}

An important remark associated to the simulations in Section \ref{sec:simosfinance1}, is that it cannot be
expected that simply increasing the number of particles will necessarily a significantly better estimation procedure.
The algorithm completely crashes to a single particle and it seems
that naively increasing computation will not improve the simulations.

As discussed above, the inherent difficulty of sampling from the given sequence of distributions
is that of extending the state-space. It is known that conditional on all parameters except the final jump, the optimal importance distribution
is the full conditional density (Del Moral et al.~2006). In practice, for many problems it is either not possible to sample
from this density, or to evaluate it exactly (which is required).
In the case that it is
possible to sample from the full conditional, but the normalizing
constant is unknown, the normalizing constant problem can be dealt with via
the random weight idea (Rousset \& Doucet, 2006). In the context of this
problem we found that the simulation from the full conditional density of
$\phi_{k_{t_n}}$ was difficult, to the extent that sensible rejection algorithms
and approximations for the random weight technique were extremely poor.

Another solution, in Del Moral et al.~(2007), consists of stopping the algorithm
when the effective sample size (ESS) drops and using an additional SMC sampler
to facilitate the extension of the state-space. However, in this example,
the ESS is so low, that it cannot be expected to help.
Due to above discussion, it is clear that a new technique is required to
sample from the sequence of distributions; two ideas are presented below.
One idea, in the context of estimating static parameters, that could be adopted is
SMC$^2$ (Chopin et al.~2011) which has appeared after the first versions of this article.

\section{Proposed Methods}\label{sec:lpf}

In the following Section, two approaches are presented to deal
with the problems in Section \ref{sec:simosfinance1}. First,
a state-space saturation approach, where sampling of PP trajectories is performed over a state space corresponding to a fixed observation interval. Second, a data-point tempering approach. In this approach, as the time parameter increases, the (artificial)
target in the new region is simply the prior and the data are then
sequentially added to the likelihood, softening the state-space
extension problem. Both of these procedures use the basic structure of Figure \ref{fig:smcsampleralgo}, with some refinements, that are mentioned in the text. As for the algorithms in Figure \ref{fig:smcsampleralgo} we add dynamic resampling steps; when  MCMC kernels
are used, one can resample before sampling - see Del Moral et al.~(2006) for details.

\subsection{Saturating the
State-Space}\label{sec:saturating_the_space}

A simple idea, which has been used in the context of reversible
jump, is to saturate the state-space. The idea relies upon knowing
the observation period of the PP ($[0,T]$) \emph{a priori} to
the simulation. This is realistic in a variety of
applications. For example, in Section \ref{sec:motivate}, often we
may only be interested in performing inference for a day of
trading and thus can set $[0,T]$.

In details, it is proposed to sample, in the case of the example in Section
\ref{sec:motivate}, from the sequence of target densities defined on the
space
\begin{equation}
E =
\bigg(\bigcup_{k\in\mathbb{N}_0}\{k\}\times\Phi_{k,T}\times(\mathbb{R}^+)^{k}\bigg)
\times\mathbb{R}\times(\mathbb{R}^+)^2.\label{eq:saturated_space}
\end{equation}
The (marginal, that is in the sense of \eqref{eq:smcsampaux}) target densities are now, denoted with a $S$ as a super-script:
$$
\pi_n^S(\bar{x}_n,\mu,\sigma|\bar{y}_n) \propto 
\prod_{i=1}^{r_{t_n}}\big\{\tilde{p}(\xi_i;\mu,\sigma)\lambda_{\omega_i}\big\}\exp\bigg\{-\int_{0}^{t_n}\lambda_u
du\bigg\}\times \prod_{i=1}^{k_{t_n}}\big\{\mathsf{p}(\zeta_i)\big\}
\mathsf{p}^S(\phi_{1:k_{t_n}})\mathsf{p}^S(k_{t_n})\times \tilde{p}(\mu,\sigma)\quad 1\leq n \leq T
$$
where the prior on the point process is $\phi_{1:k_t}|k_t \sim \mathcal{U}_{\Phi_{k,T}}$, 
$k_t \sim \mathcal{P}o(\gamma T)$.
We then use, for
$K_n$, an MCMC kernel of invariant measure $\pi_n^S$ and the standard reversal kernel
discussed in Section \ref{sec:smcmethods} for the backward kernel. 
The initial distribution is the prior and the weight at time 0 is proportional to 1 for each particle.
The incremental weights at subsequent time-points are simply: 
$$
W_n(\bar{x}_{n-1},\mu_{n-1},\sigma_{n-1}) \propto 
\frac{\pi_n^S(\bar{x}_{n-1},\mu_{n-1},\sigma_{n-1}|\bar{y}_n)}{\pi_{n-1}^S(\bar{x}_{n-1},\mu_{n-1},\sigma_{n-1}|\bar{y}_{n-1})}
\quad 1\leq n \leq T.
$$
Inference
w.r.t.~the original $\{\pi_n\}_{1\leq n \leq m^*}$ can  be
performed via IS as the supports of the targets of interest are contained within the proposals (i.e.~via the targets of the saturated algorithm).

\subsection{Data-Point Tempering}

 A simple solution to the state-space extension problem, which allows data to
be incorporated sequentially, albeit not being of fixed
computational complexity is as follows. When the time parameter
increases, the new part of the process is simulated according to
the prior. Then each new data point is added to the likelihood in
a sequential manner. In other words if there are $n$ data points, then
there are $m^* = n + \widetilde{m}$ time-steps of the algorithm.

To illustrate, consider only the scenario of the data in $[0,t_1]$, with $r_{t_1}>0$.
Then our sequence of (marginal) targets are: $\pi_0^{\textrm{TE}}(\bar{x}_1,\mu,\sigma)=\mathsf{p}(\bar{x}_1)\tilde{p}(\mu)\tilde{p}(\sigma)$
and for $1\leq n \leq r_{t_1}$
$$
\pi_n^{\textrm{TE}}(\bar{x}_1,\mu,\sigma|y_{1:n}) \propto 
\prod_{i=1}^{n}\big\{\tilde{p}(\xi_i;\mu,\sigma)\lambda_{\omega_i}\big\}\exp\bigg\{-\int_{0}^{t_1}\lambda_u
du\bigg\}
\mathsf{p}(\bar{x}_1)\tilde{p}(\mu)\tilde{p}(\sigma).
$$
Then, when considering the extension of the point-process onto $[0,t_2]$, one has a (marginal) target that is:
$$
\pi_{r_{t_1}+1}^{\textrm{TE}}(\bar{x}_2,\mu,\sigma|\bar{y}_1) \propto \prod_{i=1}^{r_{t_1}}\big\{\tilde{p}(\xi_i;\mu,\sigma)\lambda_{\omega_i}\big\}\exp\bigg\{-\int_{0}^{t_1}\lambda_u
du\bigg\}
\mathsf{p}(\bar{x}_2)\tilde{p}(\mu)\tilde{p}(\sigma)
$$
When one extends the state-space, we sample from the prior on the new segment, which leads to a unit incremental weight (up-to proportionality) - no backward kernel is required here.
Then, when adding data, we simply use MCMC kernels to move the particles (the kernels as in Section \ref{sec:nested:spaces}) and the standard reversal kernel
discussed in Section \ref{sec:smcmethods} for the backward kernel.
This leads to an incremental weight that is the ratio of the consecutive densities at the previous state. 

The potential advantage of this idea is that, when extending the state-space,
there is no extra data, to potentially complicate the likelihood.
Thus, it is expected that if the prior does not propose a significant number
of new jumps, that the incremental weights should be of relatively low variance.
The subsequent steps, when considering the jumps in $[t_n,t_{n+1})$ are performed
on a common state-space and hence should not be subject to as substantial
variability as when the state-space changes. This idea could also be adapted
to the case that the likelihood on the new interval are tempered instead (e.g.~Jasra et al.~(2007)).

As a theoretical investigation of this idea, we return to the discussion of Section \ref{sec:smccm_comparison} and in particular, where the joint target density is \eqref{eq:stability_disc}. 
We consider the data-point tempering which starts with a draw from the prior and sequentially adds data points. In otherwords runs for $r_1+1$ time-steps with
$$
\pi_n(\bar{x}_1) \propto \exp\{\sum_{i=1}^n g_i(y_i;\bar{x}_1)\}\mathsf{p}(\bar{x}_1)
$$
with a $-\infty<\underline{g}<\overline{g}<\infty$ such that for each $i$, $y_i$ and all $\bar{x}_1$,  $\underline{g} \leq  g_i(y_i;\bar{x}_1) \leq \overline{g}$.
The algorithm resamples at every time-step and uses MCMC kernels, which are assumed to satisfy, for some $\tau\in(0,1)$,
and each $1\leq n \leq r_1$, $r_1$, $,\bar{x}_1,\bar{x}_1'$
$$
K_n(\bar{x}_1,\cdot)\geq \tau K_n(\bar{x}_1',\cdot).
$$
At the very final time-step one also resamples after the final weighting of the particles.
Write $\bar{X}_1^1,\dots,\bar{X}_1^N$ as the samples that approximate target \eqref{eq:stability_disc}. 
Suppose $f\in\mathcal{B}_b(\bar{E}_1)$, then
there is a 
Gaussian central limit theorem for
$$
\sqrt{N}\Big(\frac{1}{N}\sum_{i=1}^N f(\bar{X}_1^i) -  \int_{\bar{E}_1}f(\bar{x}_1)\pi_{r_1}(\bar{x}_1)d\bar{x}_1\Big).
$$
 Writing the asymptotic variance as $\sigma^2_{\textrm{TE},r_1}(f)$,
we have the following result whose proof is in the appendix.

\begin{proposition}\label{prop:data_point}
For SMC sampler described above, with final target \eqref{eq:stability_disc} then we have for any $f\in\mathcal{B}_b(\bar{E}_1)$
that there exists a $B\in(0,+\infty)$ such that for any 
$r_1\geq 1$, $\bar{y_1}$ 
$$
\sigma^2_{\textrm{TE},r_1}(f) \leq B.
$$
\end{proposition}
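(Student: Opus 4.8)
The plan is to read the data-point tempering sampler as a Feynman--Kac particle model with multinomial-type resampling at every step, and then apply the classical time-uniform stability theory for such models (Del Moral, 2004; Del Moral et al., 2006). The key preliminary observation is that, since $K_n$ is $\pi_n$-invariant and $L_{n-1}$ is the standard reversal kernel, the incremental weight \eqref{eq:incrementalweight} collapses to a function of the \emph{previous} state alone:
$$
W_n(\bar{x}_1) \;=\; \frac{\pi_n(\bar{x}_1)}{\pi_{n-1}(\bar{x}_1)} \;\propto\; \exp\{g_n(y_n;\bar{x}_1)\}.
$$
By the standing assumption $\underline{g}\le g_i(y_i;\bar{x}_1)\le \overline{g}$, these potentials have uniformly bounded oscillation, $W_n(\bar{x}_1)/W_n(\bar{x}_1')\le \exp(\overline{g}-\underline{g})=:\delta<\infty$ for all $n,\bar{x}_1,\bar{x}_1'$. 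Moreover the mutation kernels $K_n$ satisfy, by hypothesis, the two-sided minorization $K_n(\bar{x}_1,\cdot)\ge\tau K_n(\bar{x}_1',\cdot)$, which (integrating over a set and taking inf/sup over the conditioning variable) forces the Dobrushin contraction coefficient to obey $\beta(K_n)\le 1-\tau$. Thus both inputs required by the stability theorems -- bounded potential ratios and uniformly mixing mutations -- hold with constants $\delta,\tau$ independent of $n$ and of the horizon $r_1$.

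Granting the central limit theorem (which is asserted in the statement, and is standard for bounded $f$ under these conditions), it remains to bound its variance. For a particle system that resamples at every step the asymptotic variance for $\tfrac1N\sum_{i=1}^N f(\bar X_1^i)$ admits the usual decomposition over time-steps (Del Moral, 2004, Ch.\ 9),
$$
\sigma^2_{\textrm{TE},r_1}(f) \;=\; \sum_{p=0}^{r_1}\, \pi_p\!\left(\Big(\overline{Q}_{p,r_1}\big(f-\pi_{r_1}(f)\big)\Big)^{2}\right)\;(+\ \text{one harmless extra bounded term from the final resampling}),
$$
where $\overline{Q}_{p,r_1}$ denotes the normalized Feynman--Kac semigroup from time $p$ to time $r_1$ built from the potentials $W_{p+1},\dots,W_{r_1}$ and kernels $K_{p+1},\dots,K_{r_1}$ (normalized so that $\pi_p\overline{Q}_{p,r_1}$ is a probability measure, with $\overline{Q}_{r_1,r_1}=\mathrm{Id}$). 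Each summand is at most $\big\|\overline{Q}_{p,r_1}(f-\pi_{r_1}(f))\big\|^{2}$, and the extra term coming from resampling at the last step is bounded by $\|f\|^2$.

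The core of the argument is then the geometric contraction estimate
$$
\big\|\overline{Q}_{p,r_1}\big(f-\pi_{r_1}(f)\big)\big\| \;\le\; C\,\rho^{\,r_1-p}\,\|f\|,
$$
for some $C<\infty$ and $\rho\in(0,1)$ depending only on $\tau$ and $\delta$. This is precisely the Birkhoff/Dobrushin-contraction argument for Feynman--Kac semigroups: the minorization of each $K_n$ makes the one-step normalized operator contract oscillations by a factor bounded away from $1$, the bounded potential ratios $\le\delta$ ensure the renormalization does not undo this, and iterating yields the geometric rate (cf.\ Del Moral, 2004, and the stability results quoted there; $f-\pi_{r_1}(f)$ has oscillation at most $2\|f\|$). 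Substituting into the variance decomposition and summing the geometric series gives
$$
\sigma^2_{\textrm{TE},r_1}(f) \;\le\; \|f\|^2 + \sum_{p=0}^{r_1} C^2\rho^{2(r_1-p)}\|f\|^2 \;\le\; \Big(1+\frac{C^2}{1-\rho^{2}}\Big)\|f\|^2 \;=:\; B\;<\;\infty,
$$
which is independent of $r_1$ and of $\bar{y}_1$, as required.

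I expect the only real obstacle to be organizational rather than conceptual: one must check that the standard variance formula genuinely covers the precise configuration used here (resample at every step, then an additional resampling after the final weighting -- which only appends the bounded $\|f\|^2$ term), and, above all, that the stability constants $C,\rho$ in the contraction estimate can be taken uniform in the possibly data-dependent horizon $r_1$. The latter is exactly where the hypotheses are used: the classical estimates depend on the mutation kernels and potentials only through the single minorization constant $\tau$ and the single oscillation bound $\delta=e^{\overline g-\underline g}$, both of which are, by assumption, uniform over all time-steps and all data.
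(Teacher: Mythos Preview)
Your proposal is correct and follows essentially the same route as the paper: cast the data-point tempering sampler as a Feynman--Kac particle model, invoke the asymptotic variance decomposition of Del Moral (2004, Proposition 9.4.2), and bound each summand geometrically via the Dobrushin contraction of the normalized semigroup together with the uniform bound on $Q_{p,r_1}(1)$-ratios afforded by the bounded potentials. The only organizational difference is that the paper embeds the sampler into a Feynman--Kac model on the doubled space $\bar{E}_1^2$ (so that the potential $G_p$ is a function of the current rather than the previous state, matching Del Moral's conventions exactly), which produces a Dobrushin exponent $\lfloor(r_1+1-p)/2\rfloor$ in place of your $r_1-p$; this is immaterial for the conclusion.
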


The upper-bound does not grow with the number of data. That is, by increasing the computational complexity linearly in the number of data, one has an algorithm whose error does not grow as more data (and regions) are added. 
This is similar to the observation of Beskos et al.~(2011), when increasing the dimension of the target density.
We note that the result is derived under exceptionally strong assumptions. In general, when one considers $r_1$ growing, one requires sharper tools than the Dobrushin coefficients used here (e.g.~Eberle \& Marinelli (2011)); this is beyond the scope of the current article and our
result above is illustrative (and hence potentially over-optimistic).

\subsection{Online Implementation}\label{sec:online_disc}
A key characteristic that has not yet been addressed is the fact that each has a computational complexity that is increasing with time. In a procedure that would otherwise be well suited to providing online inference, this is an unattractive feature. A large contribution to this increasing computational budget derives from the MCMC sweeps at the end of each iteration. As the space over which the invariant MCMC kernel is being applied is increased, so does the expense of the algorithm. An improvement to the computational demand of the samplers can therefore be made by keeping the space over which the MCMC kernel is applied constant. The \textit{reduced computational complexity (RCC)} alternative to each of the samplers is also designed by amending the algorithms such that, at time $t_n$, the MCMC sweep operates over, at most, 20 changepoints, i.e. over the interval $\left[\phi_{k_{t_n}-19},t_n\right)$.
Due to the well-known path degeneracy problem in SMC (see Kantas et al.~(2011)), the estimates will be poor approximations of the true values, when including static parameters and extending the space of the point process for a long time. We note, at least for our application, it is reasonable to consider $T$ fixed and thus, this is less problematic.

\section{The Finance Problem Revisited}\label{sec:financefinal}

We now return to the example from Section \ref{sec:motivate}
and the settings as in Section \ref{sec:simosfinance1}.

\subsection{Simulated Data}

The saturated and tempered samplers, as well as their RCC alternatives, were implemented using the simulated data set (in Section \ref{sec:simosfinance1}), in order to compare their respective performances against the benchmark sampler and to compare the accuracy of the resulting intensity estimates against an observed intensity process. All of the alternative samplers were implemented under the same conditions, using the algorithm and model parameters as described for the implementation of the benchmark sampler. All results are averaged over 10 runs of the algorithm.
 
In assessing the performance of the sampler, quantities of interest are, once again, the resampling rate and the processing time, as well as the minimum ESS recorded throughout the execution of the sampler.  The resampling rates for all three samplers and their RCC alternatives are presented in Table \ref{Table_ResamplingRateComparisons}, with the corresponding minimum ESS's attained recorded in Table \ref{Table_MinimumESSComparisons} and the corresponding processing times in Table \ref{Table_ProcessingTimeComparisons}. 
Figure \ref{ESS2} displays the evolution of the ESS over a particular run of the algorithm. Figure \ref{ESS3} shows the estimated intensity at each time $t_n$, given data up to time $t_n$.
From Table \ref{Table_ResamplingRateComparisons}, it is clear to see that, for the saturated and tempered samplers, an increase in $M$ results in a decrease in the resampling rates, i.e. a decrease in sampler degeneracy, as expected. It is also plain to see from Table \ref{Table_MinimumESSComparisons} that, as $N$ increases, so does the minimum ESS, and thus the reliability of the estimates. From Tables \ref{Table_ResamplingRateComparisons} and \ref{Table_MinimumESSComparisons}, Figure \ref{ESS3}
and comparing Figure \ref{ESS2} to Figure \ref{ESS1} it is clear that the saturated and tempered samplers significantly outperformed the benchmark sampler.

\begin{table}
\begin{center}
\begin{tabular}{|r|r|r|r|r|r|r|}
\cline{2-7}
\multicolumn{1}{c}{} & \multicolumn{2}{|c|}{M = 1} & \multicolumn{2}{|c|}{M = 5} & \multicolumn{2}{|c|}{M = 20}\\
\cline{2-7}
\multicolumn{1}{c}{} & \multicolumn{1}{|c|}{N=100} & \multicolumn{1}{|c|}{N=1000} & \multicolumn{1}{|c|}{N=100} & \multicolumn{1}{|c|}{N=1000} & \multicolumn{1}{|c|}{N=100} & \multicolumn{1}{|c|}{N=1000}\\
\hline
Benchmark & 31.3\% & 52.0\% & 42.3\% & 94.4\% & 74.0\% & 99.7\%\\
Benchmark - RCC & 37.6\% & 88.1\% & 69.0\% & 99.7\% & 99.4\% & 99.7\%\\
Saturated & 21.0\% & 21.3\% & 19.7\% & 20.1\% & 18.2\% & 17.6\%\\
Saturated - RCC & 20.7\% & 20.7\% & 18.5\% & 18.8\% & 15.4\% & 15.4\%\\
Tempered & 2.0\% & 2.0\% & 1.9\% & 1.9\% & 1.7\% & 1.7\%\\
Tempered - RCC & 2.0\% & 2.0\% & 1.7\% & 1.8\% & 1.4\% & 1.4\%\\
\hline
\end{tabular}
\caption{Table showing the resampling rates of each of the three SMC samplers  and their reduced computational complexity alternatives, for the six algorithm parameterisations that were tested. The ESS plots for the saturated and tempered samplers with $N=1000$, $M=5$ are given in Figure \ref{ESS2} for comparison with the corresponding ESS plot for the benchmark sampler given in Figure \ref{ESS1}}
\label{Table_ResamplingRateComparisons}
\end{center}
\end{table}

\begin{table}[tbp]
\begin{center}
\begin{tabular}{|r|r|r|r|r|r|r|}
\cline{2-7}
\multicolumn{1}{c}{} & \multicolumn{2}{|c|}{M = 1} & \multicolumn{2}{|c|}{M = 5} & \multicolumn{2}{|c|}{M = 20}\\
\cline{2-7}
\multicolumn{1}{c}{} & \multicolumn{1}{|c|}{N=100} & \multicolumn{1}{|c|}{N=1000} & \multicolumn{1}{|c|}{N=100} & \multicolumn{1}{|c|}{N=1000} & \multicolumn{1}{|c|}{N=100} & \multicolumn{1}{|c|}{N=1000}\\
\hline
Benchmark & 1.0 & 1.0 & 1.0 & 1.0 & 1.0 & 1.0\\
Benchmark - RCC & 1.0 & 1.0 & 1.0 & 1.0 & 1.0 & 1.0\\
Saturated & 38.1 & 410.2 & 38.6 & 397.0 & 38.6 & 398.9\\
Saturated - RCC & 38.5 & 401.2 & 40.6 & 394.4 & 43.0 & 425.9\\
Tempered & 47.6 & 484.7 & 47.7 & 475.5 & 47.9 & 483.4\\
Tempered - RCC & 47.8 & 475.7 & 48.4 & 481.7 & 48.3 & 486.6\\
\hline
\end{tabular}
\caption{Table showing the minimum ESS encountered during implementation by each of the three SMC samplers  and their reduced computational complexity alternatives, for the six algorithm parameterisations that were tested.}
\label{Table_MinimumESSComparisons}
\end{center}
\end{table}

\begin{table}
\begin{center}
\begin{tabular}{|r|r|r|r|r|r|r|}
\cline{2-7}
\multicolumn{1}{c}{} & \multicolumn{2}{|c|}{M = 1} & \multicolumn{2}{|c|}{M = 5} & \multicolumn{2}{|c|}{M = 20}\\
\cline{2-7}
\multicolumn{1}{c}{} & \multicolumn{1}{|c|}{N=100} & \multicolumn{1}{|c|}{N=1000} & \multicolumn{1}{|c|}{N=100} & \multicolumn{1}{|c|}{N=1000} & \multicolumn{1}{|c|}{N=100} & \multicolumn{1}{|c|}{N=1000}\\
\hline
Benchmark & 612.9 & 9689.1 & 2849.7 & 45690.4 & 13352.1 & 144621.3\\
Benchmark - RCC & 449.0 & 7910.9 & 1132.7 & 10657.6 & 3106.2 & 31208.5\\
Saturated & 1125.3 & 10667.8 & 3234.3 & 39061.1 & 15381.9 & 141817.3\\
Saturated - RCC & 637.5 & 6215.2 & 1200.7 & 11412.6 & 4391.9 & 47662.8\\
Tempered & 1160.2 & 10633.4 & 3138.4 & 38679.6 & 14086.7 & 130899.1\\
Tempered - RCC & 666.0 & 6424.4 & 1156.3 & 11209.1 & 3231.3 & 34795.3\\
\hline
\end{tabular}
\caption{Table showing the processing time, in seconds, for each of the three samplers and their reduced computational complexity alternatives, for the six algorithm parameterisations that were tested.}
\label{Table_ProcessingTimeComparisons}
\end{center}
\end{table}

\begin{figure}[h!]
\begin{center}
\subfloat[Saturated]{\label{ESS2:Saturated}\includegraphics[width=0.5\textwidth]{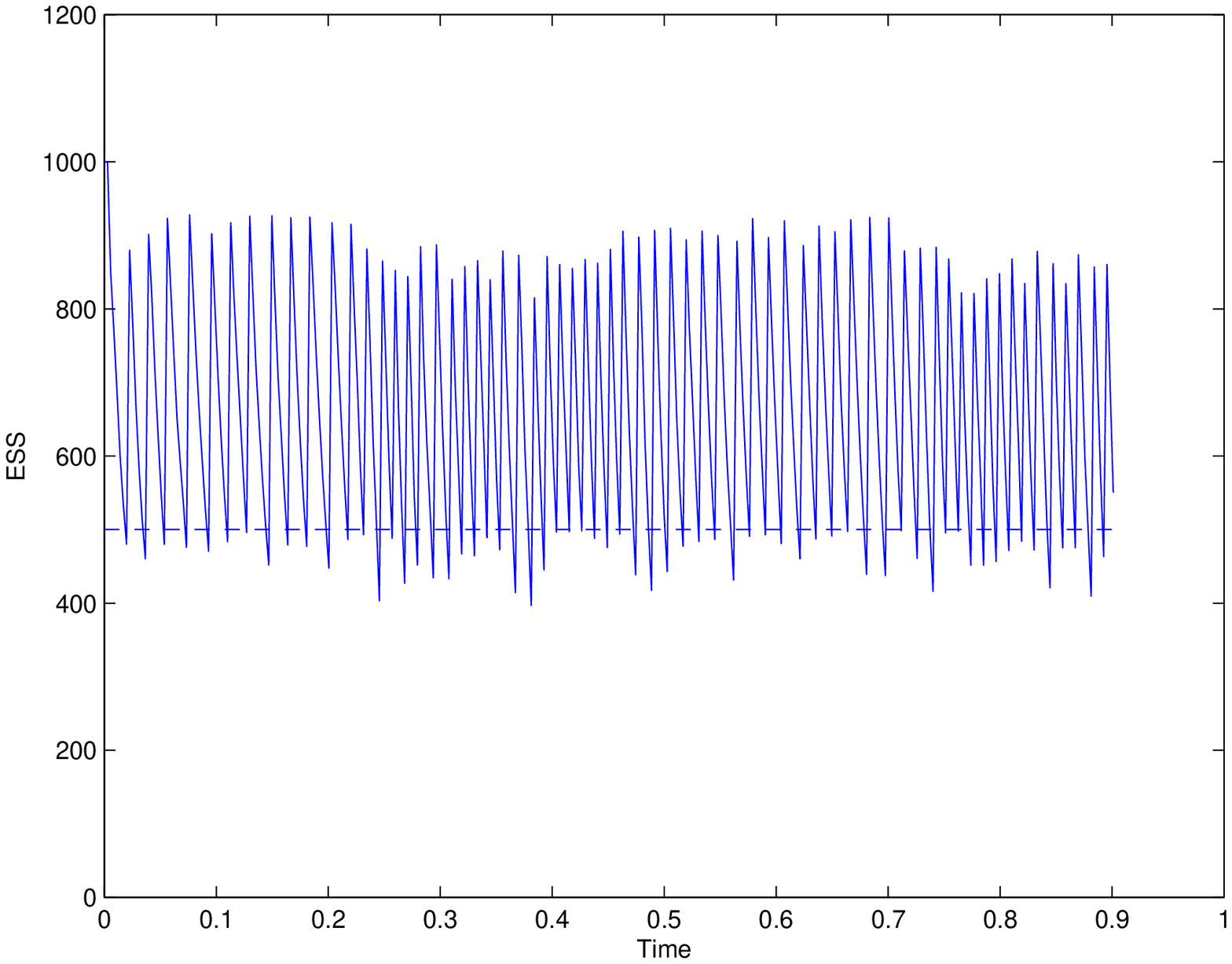}}
\subfloat[Tempered]{\label{ESS2:Tempered}\includegraphics[width=0.5\textwidth]{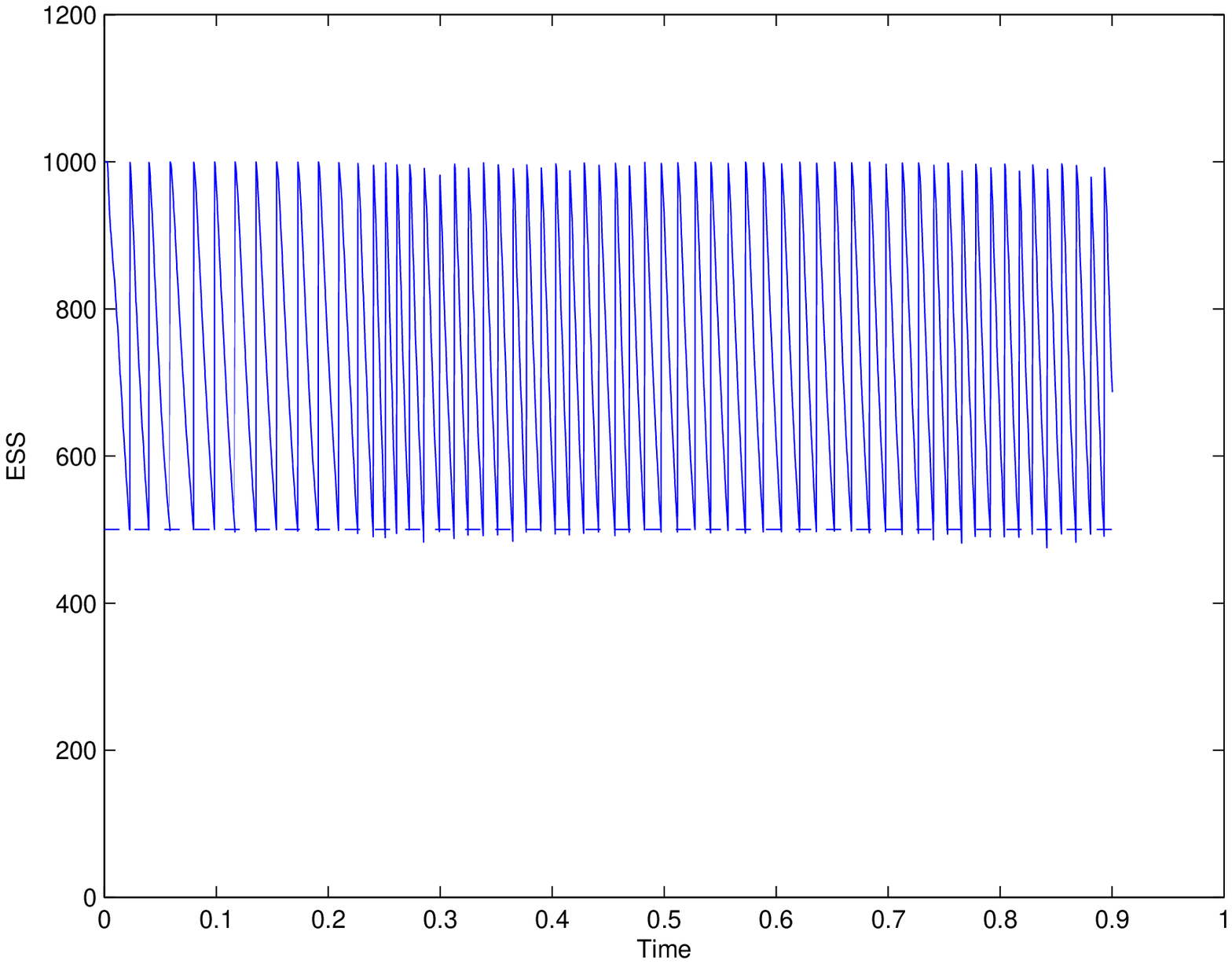}}
\caption{Effective Sample Size plots for the SMC samplers with state space saturation (left) and data point tempering (right), run with $N=1000$ particles and with $M=5$ MCMC sweeps at each iteration. The dashed line indicates the resampling threshold at $N/2=500$ particles; the corresponding resampling rates are 20.1\% for the saturated sampler and 1.9\% for the tempered sampler.}
\label{ESS2}
\end{center}
\end{figure}

We use the posterior medians to report intensities.
Since we have access to a `true' intensity process, the accuracy of these estimated intensity process is measured using the root mean square error (RMSE). Table \ref{Table_FilteredRMSEComparisons} presents the RMSEs of the  intensity estimates (given the data up to $t_n$, averaged over each $t_n$) and Table \ref{Table_SmoothedRMSEComparisons} presents the RMSEs of the smoothed (conditional upon the entire data set) intensity estimates resulting from each of the three samplers and their RCC alternatives. 
The most important result to note is the performance of the saturated and tempered samplers in comparison with the unaltered sampler. As can be seen in terms of accuracy for intensity estimates, the two proposed alterations to the sampler improve the performance consistently and significantly. Looking at the resampling rates and processing times, in Tables \ref{Table_ResamplingRateComparisons} and \ref{Table_ProcessingTimeComparisons} respectively, we can see that, as expected, although the tempered sampler resampled the particles significantly less than the benchmark sampler, the individual incorporation of each data point resulted in a greater computational cost. These two aspects of the benchmark and tempered samplers appear to have countered each other, resulting in their processing times being largely similar.

We consider also the effect that changes in $M$ and $N$ have on the accuracy of estimates provided by the saturated and tempered samplers. For the saturated and tempered samplers, the results in Tables \ref{Table_FilteredRMSEComparisons} and \ref{Table_SmoothedRMSEComparisons} corroborate the expected improvement in accuracy, in both for the sequential estimates at $t_n$ given data up-to $t_n$
and smoothed estimates (given the entire data), that results from an increase in the number of particles used. Whilst for the sequential estimates, there is no clear improvement in accuracy with increasing $M$, an improvement can be seen in the accuracy of the smoothed estimates.

\begin{table}[h!]
\begin{center}
\begin{tabular}{|r|r|r|r|r|r|r|}
\cline{2-7}
\multicolumn{1}{c}{} & \multicolumn{2}{|c|}{M = 1} & \multicolumn{2}{|c|}{M = 5} & \multicolumn{2}{|c|}{M = 20}\\
\cline{2-7}
\multicolumn{1}{c}{} & \multicolumn{1}{|c|}{N=100} & \multicolumn{1}{|c|}{N=1000} & \multicolumn{1}{|c|}{N=100} & \multicolumn{1}{|c|}{N=1000} & \multicolumn{1}{|c|}{N=100} & \multicolumn{1}{|c|}{N=1000}\\
\hline
Benchmark & 688.561 & 1116.639 & 620.432 & 1942.992 & 1330.232 & 1501.263\\
Benchmark - RCC & 676.932 & 2026.956 & 880.824 & 2247.313 & 1472.126 & 1264.533\\
Saturated & 242.834 & 192.580 & 228.390 & 193.778 & 237.315 & 198.223\\
Saturated - RCC & 229.449 & 189.279 & 224.692 & 193.379 & 225.592 & 194.623\\
Tempered & 254.396 & 196.928 & 247.754 & 201.681 & 248.367 & 202.501\\
Tempered - RCC & 256.012 & 191.407 & 227.241 & 197.043 & 230.805 & 200.227\\
\hline
\end{tabular}
\caption{Table showing the root mean square error of the intensity.
This is given the data up to $t_n$, averaged over each $t_n$ and
for each of the three samplers and their reduced computational complexity alternatives, for the six algorithm parameterisations that were tested.}
\label{Table_FilteredRMSEComparisons}
\end{center}
\end{table}

\begin{table}[h!]
\begin{center}
\begin{tabular}{|r|r|r|r|r|r|r|}
\cline{2-7}
\multicolumn{1}{c}{} & \multicolumn{2}{|c|}{M = 1} & \multicolumn{2}{|c|}{M = 5} & \multicolumn{2}{|c|}{M = 20}\\
\cline{2-7}
\multicolumn{1}{c}{} & \multicolumn{1}{|c|}{N=100} & \multicolumn{1}{|c|}{N=1000} & \multicolumn{1}{|c|}{N=100} & \multicolumn{1}{|c|}{N=1000} & \multicolumn{1}{|c|}{N=100} & \multicolumn{1}{|c|}{N=1000}\\
\hline
Benchmark & 768.702 & 670.656 & 495.019 & 627.909 & 489.243 & 571.107\\
Benchmark - RCC & 698.640 & 1034.890 & 572.794 & 572.841 & 535.004 & 599.031\\
Saturated & 360.794 & 264.331 & 296.953 & 114.064 & 153.444 & 89.397\\
Saturated - RCC & 478.871 & 265.477 & 405.767 & 266.980 & 468.853 & 205.243\\
Tempered & 350.015 & 170.321 & 271.712 & 128.078 & 157.709 & 81.666\\
Tempered - RCC & 485.825 & 249.529 & 475.348 & 193.898 & 514.107 & 180.914\\
\hline
\end{tabular}
\caption{Table showing the smoothed root mean square error 
of the intensity. This is given the entire data set and
for each of the three samplers and their reduced computational complexity alternatives, for the six algorithm parameterisations that were tested.}
\label{Table_SmoothedRMSEComparisons}
\end{center}
\end{table}

\begin{figure}[h!]
\begin{center}
\subfloat[Benchmark]{\label{FilteredEstimates:Benchmark}\includegraphics[width=0.3\textwidth]{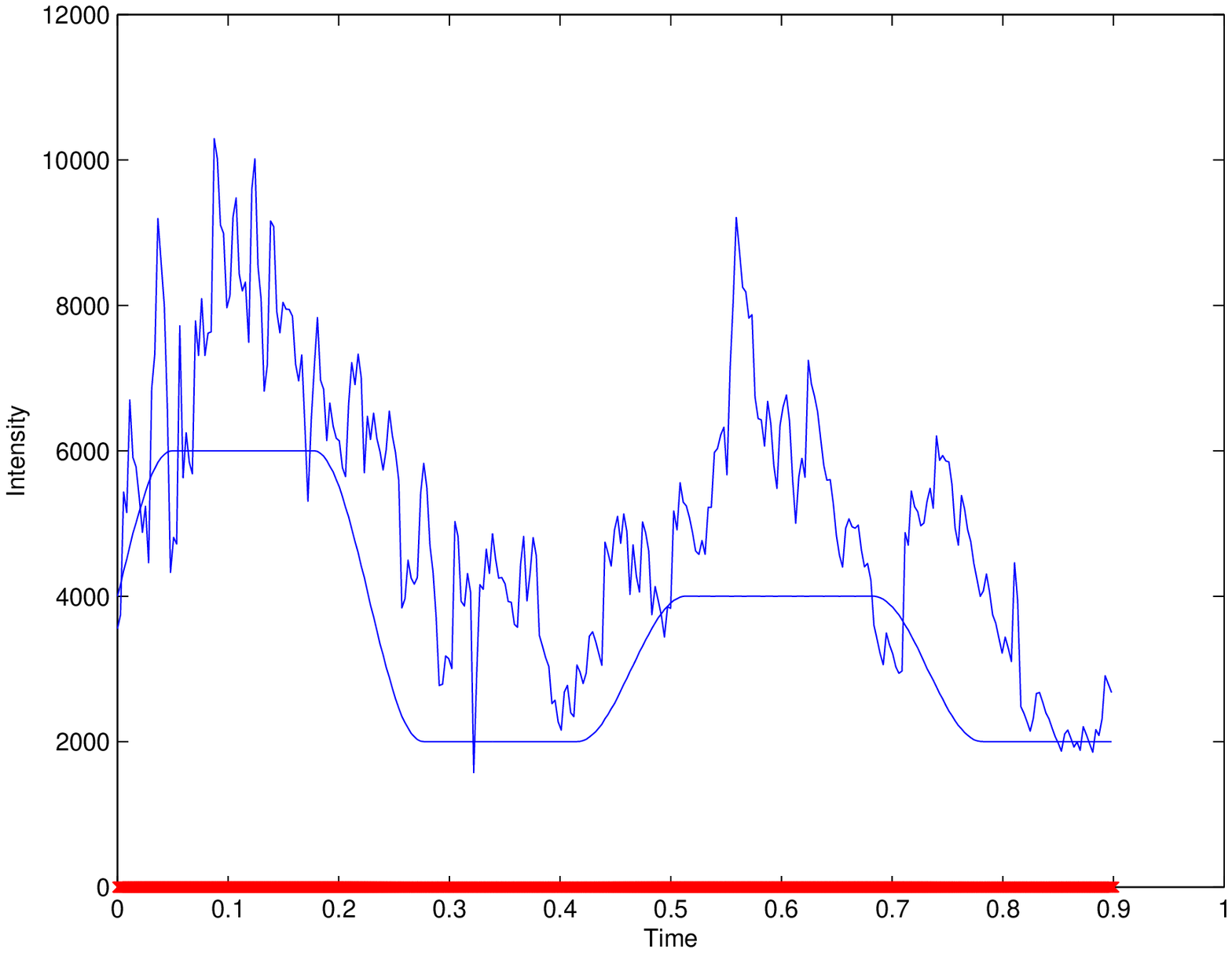}}
\subfloat[Saturated]{\label{FilteredEstimates:Saturated}\includegraphics[width=0.3\textwidth]{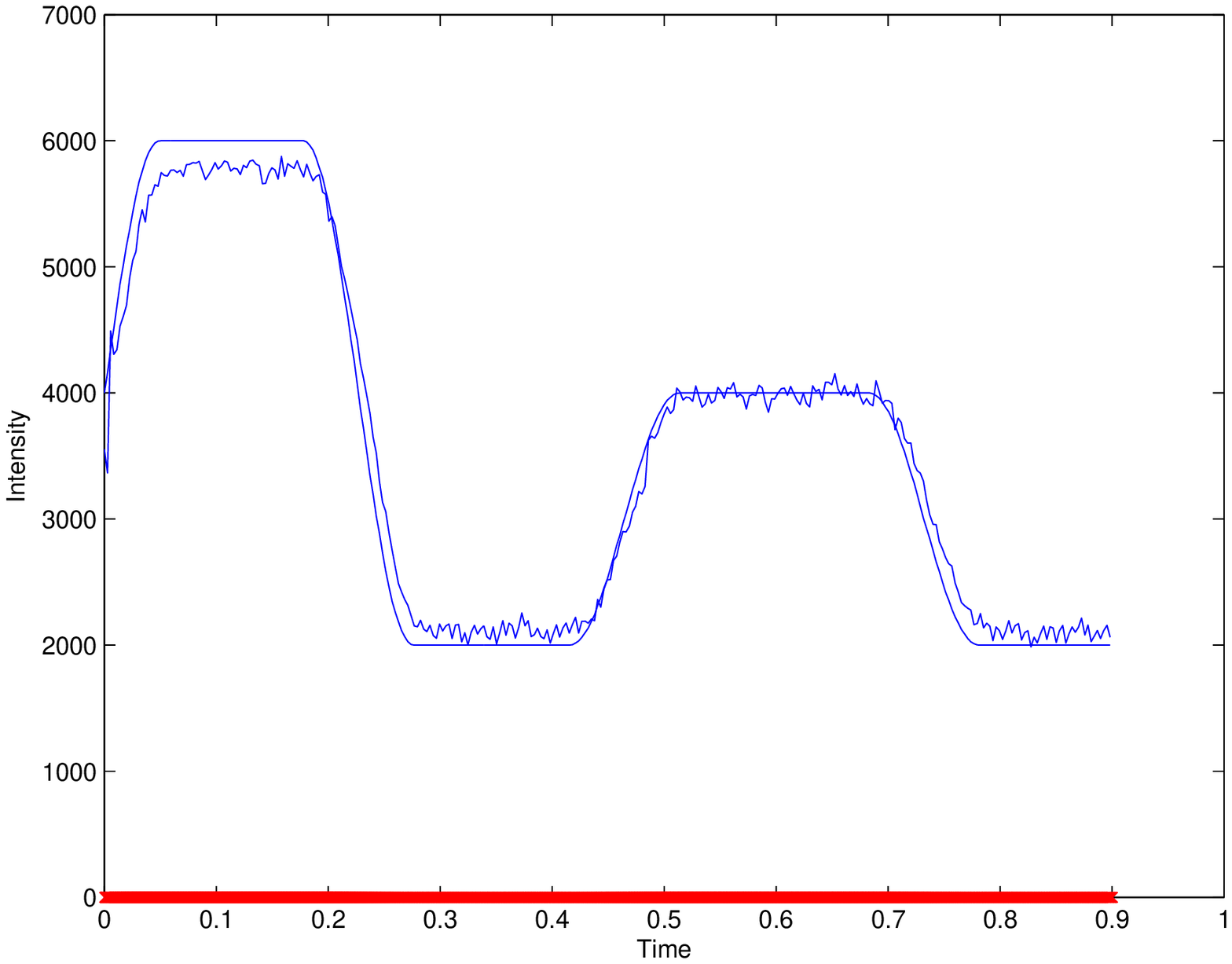}}
\subfloat[Tempered]{\label{FilteredEstimates:Tempered}\includegraphics[width=0.3\textwidth]{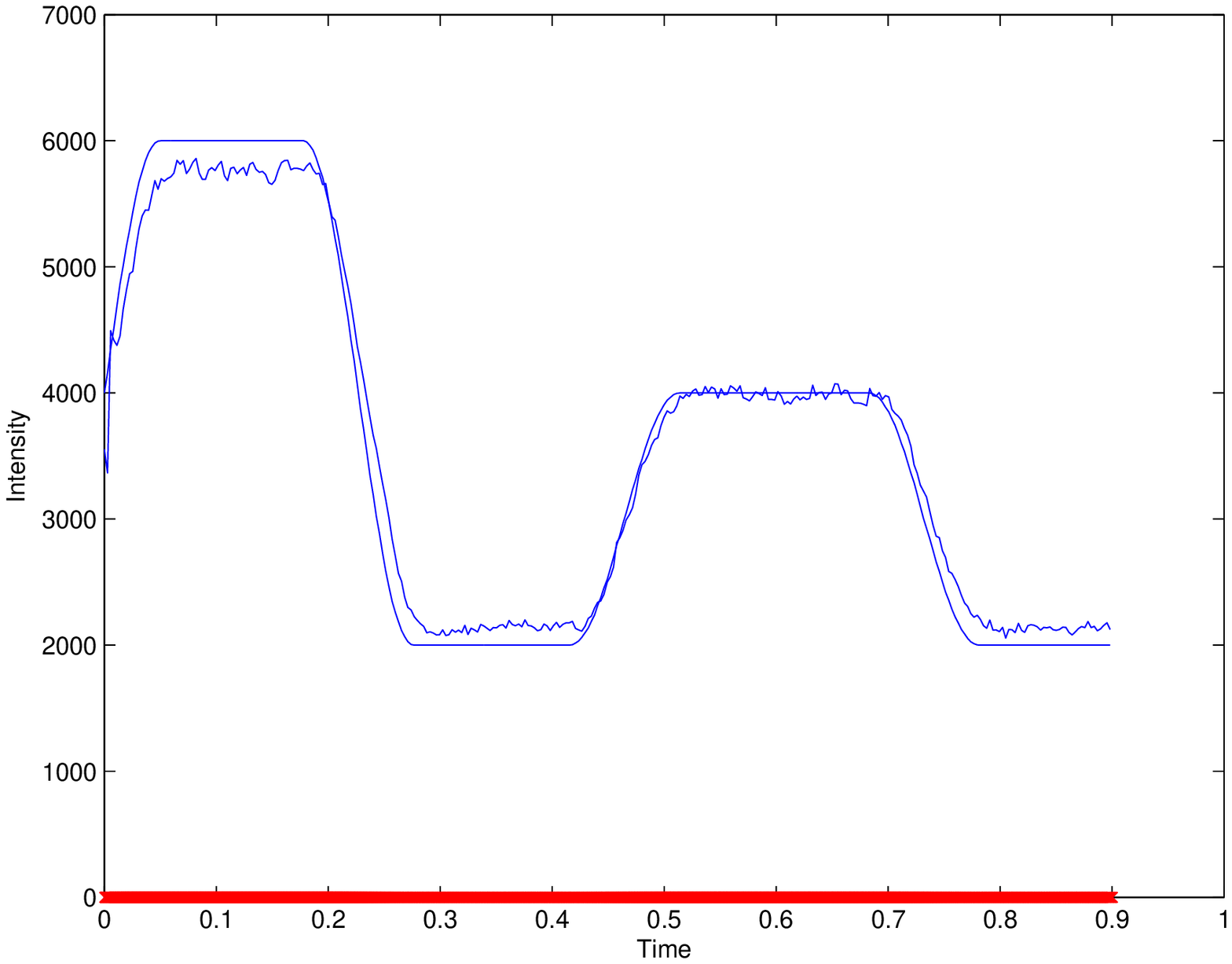}}
\caption{Estimates (given the data up to $t_n$) of the intensity of a simulated data set, generated by the benchmark SMC sampler (left) and the samplers with state space saturation (centre) and data point tempering (right), run with $N=1000$ particles and with $M=5$ MCMC sweeps at each iteration. The model parameters were $\gamma=0.001$, $\nu=150$ and $s=20$.}
\label{ESS3}
\end{center}
\end{figure}

Finally, using the simulated data, we consider the performance of the samplers when limiting the space over which the invariant MCMC kernels are applied, i.e.~the RCC alternatives. As can be seen from Table \ref{Table_FilteredRMSEComparisons}, the RCC alteration does not sacrifice any accuracy in the estimates of the intensity (given the data up to each time $t_n$), however it can be seen from Table \ref{Table_SmoothedRMSEComparisons} that the accuracy of the smoothed intensity estimates is rather poor. This is to be expected, due to path degeneracy; we note that one cannot estimate static parameters with the RCC approach unless the time window $T$ is quite small.

\subsection{Real Data}

All three samplers were also tested on real financial data, with the RCC alternatives also being used to generate intensity estimates, given the data up to $t_n$: the share price of ARM Holdings, plc., traded on the LSE was used. The entire data set was of size $r_T=1819$, $[0,T]=[0,0.3]$ (represents 3/10 of a trading day, that is, 3/10 of 24 hours; the first trade is just after 9am and the last around 16:15.) with $t_n= n*0.001$. 
Genuine financial data is likely to correspond to a more volatile latent intensity process than that which was used to generate the synthetic data set, and so the parameterisation of the target posterior should be chosen such that large jumps in the intensity process are possible, and such that the intensity may also revert quickly to a lower intensity level. Hence, we specify: $\left\{\gamma,\nu,s\right\}=\left\{0.001,500,250\right\}$.
Each of the samplers were run using $N=1000$ particles, applying $M=5$ MCMC sweeps at each iteration, whilst the resampling rates and the minimum ESS obtained for each procedure were monitored to ensure that the algorithms did not collapse.

Clearly, there is no `known' intensity process against which to compare the point-wise estimates produced by the samplers. In addition, any inverse-duration based representation of the intensity against which useful comparisons could be drawn would involve making assumptions on the smoothness of the intensity process itself. Thus, we turn to measuring the one-step-ahead predictive accuracy of the estimators of the intensity. This is achieved as follows: denoting the intensity estimated over the interval $\left[t_{n-j},t_n\right)$ as $\hat\lambda_{n,j}$, one predicts the expected number of ticks in the interval $\left[t_{n+i-j},t_{n+i}\right)$ as $\left(\hat\lambda_{n,j}\right)^{-1}$ for $i\ge1$ and $j\ge1$, where $j$ is the number of periods over which the prediction is made and $i$ is a lag index. The prediction errors are then calculated based on the predicted and observed number of ticks in the period $\left[t_{n+i-j},t_{n+i}\right)$; the root mean square prediction error (RMSPE) will be used.
 We will report on the one-step-ahead estimates ($i=1$), estimating the intensity over each interval with $j=1$.

\begin{table}
\begin{center}
\begin{tabular}{|r|c|c|c|c|}
\cline{2-5}
\multicolumn{1}{c|}{} & Data $t_n$ & Smoothed & Processing & Resampling\\
\multicolumn{1}{c|}{} & RMSPEs & RMSPEs & Times (s) & Rates\\
\hline
Saturated & 2.18876 & 2.13479 & 4064.5 & 39.5\%\\
Saturated - RCC & 2.19112 & - & 2193.1 & 39.9\%\\
Tempered & 2.34671 & 2.11468 & 4605.5 & 19.8\%\\
Tempered - RCC & 2.42776 & - & 2237.3 & 19.9\%\\
\hline
\end{tabular}
\caption{Table showing the root mean square prediction errors for the  intensity estimates (given data up to time $t_n$ and entire data (smoothed))
given by each of the three samplers for the parameter values $N=1000$, $M=5$. The RMSPEs for the smoothed intensity estimates given by the RCC alternatives to the samplers are also provided, along with the observed processing times and resampling rates for each sampler.}
\label{Table_RMSPEComparisons}
\end{center}
\end{table}

Table \ref{Table_RMSPEComparisons} presents the RMSPEs for the intensity estimates resulting from the samplers and the RCC alternatives. It was observed that, in calculating the RMSPEs for lag indices $i=1,\ldots,100$ using each sampler, both the saturated and tempered samplers displayed the smallest error at $i=1$, i.e.~their respective one-step-ahead predictions were more accurate than those made for lags up to 2.64 hours (each observation interval corresponds to 0.0264 days = 1.584 minutes). 

The RCC samplers provide significant computational savings and do not seem to degrade substantially, w.r.t.~the error criteria. Again, we remark that in general one should not trust the estimates of the RCC, but as seen here, they can provide a guideline for the intensity values.
% 
% Finally, an existing intensity estimation procedure for inhomogeneous point processes was sought within Matlab and R, against which to compare the predictive errors of the above samplers; the only available procedure was found within the spatstat package in R. This procedure, however, simply estimates the intensity over an interval as the number of events within the observed interval, scaled to suit a suitable interval of interest (such as the trading day). When applied to the data, the resulting RMSPE for one-step-ahead prediction was found to be 11.2407. As would be expected, the model described here significantly outperforms this na\"ive estimation procedure. 

\section{Summary}\label{sec:summ}
In this paper we have considered SMC simulation for partially
observed point processes and implemented them for a particular doubly stochastic PP. Two solutions were given,
one based upon saturating the state-space, which is suitable in a wide variety
of applications and data-point tempering which can be used in sequential
problems. We also discussed RCC versions of these algorithms, which reduce computation,
but will be subject to the path degeneracy problem when including static parameters and considering the smoothing distribution.
We saw that the methods can be successful, in terms of
weight degeneracy versus the benchmark approach detailed in Del Moral et al.~(2007). In addition, for real data it was observed that predictions using the RCC
could be reasonable (relative to the normal versions of the algorithms), but caution on using these estimates should be used.

The methodology we have presented is not online. As we have seen, when one modifies the approaches to have fixed computational complexity, the path degeneracy problem occurs and one cannot deal with scenario with static parameters.
In this case, we are working with Dr. N. Whiteley on a technique based upon fixed window filtering.  This is an on-line
algorithm which allows data to be incorporated as they arrive with
computational cost which is non-increasing over time, but is biased. The approach involves sampling from a sequence of
distributions which are constructed such that, at time $t_n$,
previously sampled events in $[0,t_{n-\ell}]$ can be discarded.
In order to be exact (in the sense of targeting the
true posterior distributions), this scheme would involve
point-wise evaluation of an intractable density. We are working on
a sensible approximation of this density, at the cost of
introducing a small bias.

\subsubsection*{Acknowledgement}
We thank Nick Whiteley for conversations on this work. The second author was supported by an MOE grant.
We thank two referees for their comments, which have vastly improved the article.

\section*{Appendix}

\subsection*{Proposition \ref{prop:cmjustification}}

In this appendix we give a proof of
Proposition \ref{prop:cmjustification}.
For probability meaure $\varpi$ and function $f$, $\varpi(f) := \int f(x)\varpi(dx)$.
For any collection of points $(\chi_1^{(1)},\dots,\chi_{n-1}^{(N)})\in
\bar{E}_{n-1}^N$ write
$$
S_{n-1}^N(x) = \frac{1}{N}\sum_{i=1}^N\mathbb{I}_{\{\chi_{n-1}^{(i)}\}}(x).
$$
The transition kernels are written $K_1$ (which is not to be confused with
the $K_1$ from the SMC samplers algorithm) and
for any $n\geq 2$, $N\geq 1$, $N-$empirical density $S_{n-1}^N$,
$K_{S_{n-1}^N,n}$ is the kernel of invariant distribution
$$
\frac{l_{(t_{n-1},t_{n}]}(\bar{y}_{n,1};\bar{x}_n)
}{p_n(\bar{y}_{n,1}|\bar{y}_{n-1})}
\mathsf{p}(\bar{x}_{n,1})
S_{n-1}^N(\bar{x}_{n-1}).
$$
Recall the generic notation $\bar{x}_n\in \bar{E}_n$.
We drop the dependence upon the data and denote
\begin{equation}
g_n(\bar{x}_n) = \frac{l_{(t_{n-1},t_{n}]}(\bar{y}_{n,1};\bar{x}_n)
}{p_n(\bar{y}_{n,1}|\bar{y}_{n-1})}
\mathsf{p}(\bar{x}_{n,1})
% \frac{l_{\bar{r}_{n}}(\omega_{1:\bar{r}_{n}},\xi_{1:\bar{r}_{n}};\nu_{t_n})
% }{p_n(\omega_{1:\bar{r}_{n}},\xi_{1:\bar{r}_{n}}|\zeta_{r_{n-1}})}
% \mathsf{p}(\bar{k}_{t_n},x_{1:\bar{k}_{t_n}},\phi_{1:\bar{k}_{t_n}})
\label{eq:filteringweight}.
\end{equation}
The $N-$empirical measure of points generated up to time $n-1$ is written
$S_{\bar{x},n-1}^N$.
For a given $n\geq 1$, $f_n:\bar{E}_n\rightarrow\mathbb{R}$
we have the notation $K_{n,S_{n-1}^N}(f_n)(x):= \int_{\bar{E}_n}f_n(y)K_{n,S_{n-1}^N}(x,dy)$
and $i\geq 1$, $K^i_{n,S_{n-1}^N}(f_n)(x) := \int_{\bar{E}_n} K^{i-1}_{n,S_{n-1}^N}(x,dy)K_{n,S_{n-1}^N}(f_n)(y)$,
$K^0_{n,S_{n-1}^N}(x,dy)=\delta_x(dy)$ the Dirac measure. The $\sigma-$finite
measure $d\bar{x}_{n,1}$ is defined on the space $\bar{E}_n\setminus\bar{E}_{n-1}$;
in practice it is the product of an appropriate version of Lebesgue
and counting measures.

The following
assumption is made.
\begin{quote}
\emph{
Assumption \emph{\bf{(A)}}. There exist an $\epsilon_1\in(0,1)$ and probability measure
$\kappa_1$ on $\bar{E}_1$ such that for any $\bar{x}_1\in \bar{E}_1$
$$
K_1(\bar{x}_1,\cdot) \geq \epsilon_1\kappa_1(\cdot).
$$
For any $n\geq 2$, there exist an $\epsilon_n\in(0,1)$ and probability measure
$\kappa_n$ on $\bar{E}_n\setminus \bar{E}_{n-1}$
such that for any $\widetilde{x}_n\in \bar{E}_n$
and any collection of points $(\chi_{n-1}^{(1)},\dots,\chi_{n-1}^{(N)})\in
\bar{E}_{n-1}^N$
$$
K_{S_{n-1}^N,n}(\bar{x}_n,\cdot) \geq \epsilon_n S_{n-1}^N(\cdot)\kappa_n(\cdot).
$$
For any $n\geq 2$
$$
\sup_{\bar{x}_{n-1}\in \bar{E}_{n-1}}\int_{\bar{E}_n\setminus\bar{E}_{n-1}} |g_n(\bar{x}_{n-1},\bar{x}_{n,1})| d\bar{x}_{n,1} <+\infty
$$
where $g_n$ is as in (\ref{eq:filteringweight}).
}
\end{quote}

It should be noted that the uniform ergodicity assumption on $K_{S_{n-1}^N,n}(\bar{x}_n,\cdot)$
is quite strong. If the kernel $K_{S_{n-1}^N,n}$ were an Metropolis-Hastings independence sampler of proposal $S_{n-1}^N\times q_n(\cdot)$ $\bar{x}_n=(\bar{x}_{n-1},\bar{x}_{n,1})$, then
$$
K_{S_{n-1}^N,n}(\bar{x}_n,\cdot) \geq \min\bigg\{1,
\frac{g_n(\bar{v}_n)q_n(\bar{x}_{n,1})}{g_n(\bar{x}_n)q_n(\bar{v}_{n,1})}\bigg\}S_{n-1}^N(\cdot)q_n(\cdot)
$$
satisfies the assumption if $q_n(\bar{x}_{n,1})/g_n(\bar{x}_n)$ is uniformly
lower-bounded. Note also, due to the suppression of the data from the notation,
it is typical that $\epsilon_n$ would depend upon $\bar{y}_{n}$.

\begin{proof}
The proof is inductive on $n$. Some details are omitted as the
proof is quite similar to the control of adaptive MCMC chains, e.g.~Andrieu et al.~(2011). It should be noted the proof for this algorithm differs
as the kernel possesses an invariant measure that does not change
with the iteration $i\in\{1,\dots,N\}$.

Let $n=1$ then, by (A) $K_1$, is a uniformly
ergodic Markov kernel of invariant measure $\pi_1$. It is simple to use the
Poisson equation to prove the proposition, which is given to establish the induction.
Let $\hat{f}_1(\bar{x}_1)=\sum_{i=0}^{\infty}[K_1^i(f_1)(\bar{x}_1)-\pi_1(f_1)]$
be the solution to the Poisson equation; $\hat{f}_1 - K_1(\hat{f}_1) = f_1 - \pi_1(f_1)$. Then
\begin{eqnarray*}
\sum_{i=1}^N[f_1(\bar{x}_1^{(i)})-\pi_1(f_1)] & = &
\sum_{i=1}^N[\hat{f}_1(\bar{x}_1^{(i)})-K_1(\hat{f}_1)(\bar{x}_1^{(i)})]\\
& = & \sum_{i=1}^{N-1}[\hat{f}_1(\bar{x}_1^{(i+1)})-K_1(\hat{f}_1)(\bar{x}_1^{(i)})]
+ \hat{f}_1(\bar{x}_1^{(1)}) -
K_1(\hat{f}_1)(\bar{x}_1^{(N)})
\end{eqnarray*}
the first quantity on the R.H.S. is a Martingale, $M_N^1$, w.r.t.~the filtration $\mathcal{F}_1^{i}$ (i.e.~the $\sigma-$algebra
generated by Markov chain). Then, using the Minkowski inequality
$$
\mathbb{E}_{\bar{x}_1^{(1)}}\bigg[\bigg|\frac{1}{N}\sum_{i=1}^N[f_1(\bar{x}_1^{(i)})-\pi_1(f_1)]\bigg|^p\bigg]^{1/p}
\leq \frac{1}{N} \bigg\{ \mathbb{E}_{\bar{x}_1^{(1)}}\bigg[\big|M_N^1\big|^p\bigg]^{1/p}
+ |\hat{f}_1(\bar{x}_1^{(1)})| + \mathbb{E}_{\bar{x}_1^{(1)}}\bigg[
\bigg|K_1(\hat{f}_1)(\bar{x}_1^{(N)})\big|^p\bigg]^{1/p}
\bigg\}.
$$
The last term can be dealt with as follows.
\begin{eqnarray*}
\mathbb{E}_{\bar{x}_1^{(1)}}\bigg[
\bigg|K_1(\hat{f}_1)(\bar{x}_1^{(N)})\bigg|^p\bigg]^{1/p} & \leq & \mathbb{E}_{\bar{x}_1^{(1)}}\bigg[\bigg|\sum_{i=0}^{\infty}[K_1^i(f_1)(\bar{x}_1^{(N+1)})
-\pi_1(f_1)]\bigg|^{p}\bigg]^{1/p}\\
& \leq & \|f_1\|
\sum_{i=0}^{\infty}\mathbb{E}_{\bar{x}_1^{(1)}}\bigg[\bigg|[K_1^i-\pi_1]\bigg(\frac{f_1}{\|f_1\|}\bigg)(\bar{x}_1^{(N+1)})\bigg|^{p}\bigg]^{1/p}\\
& \leq & \frac{\|f_1\|}{\epsilon_1}
\end{eqnarray*}
here we have applied the conditional Jensen inequality and the bound on the
total variation distance for uniformly ergodic Markov chains:
$\forall x\in\bar{E}_1$, $\sup_{f:\bar{E}_{1}
\rightarrow[0,1]}|K_1^i(f)(x)-\pi_1(f)|\leq (1-\epsilon_1)^i$.
Note that this bound holds for any $\bar{x}_1\in \bar{E}_1$.
The Martingale term is bounded using the B\"urkholder
and Davis inequalities (i.e.~the inequality below holds for any $p\geq 1$): $$
\mathbb{E}_{\bar{x}_1^{(1)}}\bigg[\big|M_N^1\big|^p\bigg]^{1/p}
\leq B_p \mathbb{E}_{\bar{x}_1^{(1)}}\bigg[\bigg|\sum_{i=1}^{N-1}[\hat{f}_1(\bar{x}_1^{(i)})-
K_1(\hat{f}_1)(\bar{x}_1^{(i)})]^2\bigg|^{p/2}\bigg]^{1/p}.
$$
When $p\geq 2$ the Minkowski inequality and the above
manipulations yield a bound $\sqrt{N}B(p,\epsilon_1)\|f_1\|$, with
$B(p,\epsilon_1)$ a constant only depending upon $p$ and
$\epsilon_1$. When $p\in[1,2)$ the inequality $(a-b)^2 \leq
2(a^2+b^2)$ for $a,b\in\mathbb{R}$ is applied then Jensen to yield
a similar bound; see Andrieu et al.~(2011) and the references therein. Thus,
for $n=1$ it follows
$
\mathbb{E}_{\bar{x}_1^{(1)}}\big[|M_N^1|^p\big]^{1/p}
\leq
\sqrt{N}B(p,\epsilon_1)\|f_1\|
$; note that $B(p,\epsilon_1)$ depends only on $\epsilon_1$ and $p$ - this is important
in the sequel. Putting these bounds together and noting that, by the above
arguments, the solution to the Poisson equation is uniformly bounded in $x$
the proof at rank $n=1$ is completed.

Now assume the result at $n-1$ and consider $n$. Note that via Fubini
\begin{equation*}
\pi_n(f_n) = \int_{\bar{E}_{n}} f_n(\bar{x}_{n}) g_n(\bar{x}_{n})\pi_{n-1}(d\bar{x}_{n-1})d\bar{x}_{n,1}\\
 = \int_{\bar{E}_{n-1}} I(f_n\times g_n)(\bar{x}_{n-1}) \pi_{n-1}(d\bar{x}_{n-1})
\end{equation*}
where $I(f_n\times g_n) = \int_{\bar{E}_n\setminus E_{n-1}} f_n(\bar{x}_{n-1},\bar{x}_{n,1})
g_n(\bar{x}_{n-1},\bar{x}_{n,1})
d\bar{x}_{n,1}$. Then application of the Minkowski inequality yields:
\begin{eqnarray}
\mathbb{E}_{\bar{x}_{1}^{(1)}}\bigg[\bigg|\frac{1}{N}\sum_{i=1}^Nf_n(\bar{x}_{n}^{(i)})-\pi_n(f_n)\bigg|^p\bigg]^{1/p}
& \leq  &
\mathbb{E}_{\bar{x}_{1}^{(1)}}\bigg[\bigg|\frac{1}{N}\sum_{i=1}^Nf_n(\bar{x}_{n}^{(i)})-S_{\bar{x},n-1}^N(I(f_n\times
g_n))\bigg|^p\bigg]^{1/p}
\nonumber\\ & &
+
\mathbb{E}_{\bar{x}_{1}^{(1)}}\bigg[\bigg|[S_{\bar{x},n-1}^N-\pi_{n-1}](I(f_n\times
g_n))\bigg|^p\bigg]^{1/p}
\label{eq:induction}.
\end{eqnarray}
Due to the induction hypothesis and (A), the second term on the R.H.S of the inequality is upper-bounded by
$$
\frac{B_{p,n-1}\sup_{\bar{x}_{n-1}\in \bar{E}_{n-1}}I(|f_n\times g_n|)(\bar{x}_{n-1})}{\sqrt{N}}
\leq
\frac{B_{p,n}\|f_n\|}{\sqrt{N}}
$$
for some $B_{p,n}<+\infty$; if the data were not suppressed, then there is
an explicit dependence upon this quantity.
Then considering the first term on the R.H.S of (\ref{eq:induction}), conditioning
upon the $\sigma-$algebra $\mathcal{F}_1^N\otimes\cdots\otimes\mathcal{F}_{n-1}^{N}$
generated by the process at time $n$ is a uniformly ergodic Markov chain of invariant
distribution $S_{\bar{x},n-1}^N(d\bar{x}_{n-1})g_n(\bar{x}_n)d\bar{x}_{n,1}$. Thus, for example:
$$
\mathbb{E}_{\bar{x}_1^{(1)}}\bigg[\bigg|K_{n,S_{\bar{x},n-1}^N}(\hat{f}_n)(\bar{x}_n^{(N)})\bigg|^p\bigg]^{1/p}
\leq \mathbb{E}_{\bar{x}_1^{(1)}}\bigg[\bigg(\frac{\|f_n\|}{\epsilon_n}\bigg)^p\bigg]^{1/p}
$$
adopting exactly the above arguments. Noting that the bound on the conditional
expectation is deterministic,
i.e.~does not depend upon $\mathcal{F}_1^N\otimes\cdots\otimes\mathcal{F}_{n-1}^{N}$,
the induction is easily completed.
\end{proof}

\subsection*{Proof of Proposition \ref{prop:data_point}}

For the proof of Proposition \ref{prop:data_point}, we require a round of notations. We write
$\tilde{x}_p=(\bar{x}_p,\bar{x}_p')\in\bar{E}_1^2$ and define the following quantities:
$$
G_p(\tilde{x}_p) = \frac{\pi_p(\bar{x}_p)}{\pi_{p-1}(\bar{x}_p)} \quad 1\leq p \leq r_1
$$
with $G_0(\tilde{x}_0)=1$. In addition, set $\eta_0(\cdot)=\mathsf{p}(\cdot)$ and
$$
M_p(\tilde{x}_{p-1},d\tilde{x}_p) = \delta_{x_{p-1}'}(dx_p)K_{p}(x_p,dx_p') \quad 1\leq p \leq r_1
$$
We add an extra Markov kernel to allow us to use directly formulae in Del Moral (2004); 
$M_{r_1+1}(\tilde{x}_{r_1},d\tilde{x}_{r_{1}+1})=\delta_{\tilde{x}_{r_1}}(d\tilde{x}_{r_{1}+1})$. Then we define
$$
\eta_p(d\tilde{x}_p) = \frac{\int_{(\bar{E}_1^2)^{p-1}} \eta_0(d\tilde{x}_0) \prod_{q=0}^{p-1} G_q(\tilde{x}_q)M_q(\tilde{x}_{q-1},d\tilde{x}_q)}{\int_{(\bar{E}_1^2)^{p}} \eta_0(d\tilde{x}_0) \prod_{q=0}^{p-1} G_q(\tilde{x}_q)M_q(\tilde{x}_{q-1},d\tilde{x}_q)}
\quad 1 \leq p \leq r_1+1.
$$
In addition $Q_p(\tilde{x}_{p-1},d\tilde{x}_{p})=G_{p-1}(\tilde{x}_{p-1})M_{p}(\tilde{x}_{p-1},d\tilde{x}_{p})$, $1\leq p \leq r_1+1$, with
$$
Q_{p,n}(\tilde{x}_{p-1},d\tilde{x}_{n}) = \int_{(\bar{E}_1^{2})} Q_{p+1}(\tilde{x}_{p},d\tilde{x}_{p+1})\dots Q_n(\tilde{x}_{n-1},\tilde{x}_n) \quad 1\leq p\leq n \leq r_1+1
$$
with the convention that $Q_{p,p}$ is the identity operator. Also define $P_{p,n}(\tilde{x}_{p-1},d\tilde{x}_{n})=Q_{p,n}(\tilde{x}_{p-1},d\tilde{x}_{n})/Q_{p,n}(1)(\tilde{x}_{p-1})$ and finally
$$
\overline{Q}_{p,n}(\tilde{x}_{p-1},d\tilde{x}_{n}) = \frac{Q_{p,n}(\tilde{x}_{p-1},d\tilde{x}_{n})}{\eta_pQ_{p,n}(1)}.
$$

\begin{proof}[Proof of Proposition \ref{prop:data_point}]
We have from Proposition 9.4.2 of Del Moral (2004) that:
\begin{equation}
\sigma^2_{\textrm{TE},r_1}(f) = \sum_{p=0}^{r_1+1} \eta_p\big(\overline{Q}_{p,r_1+1}(f-\eta_{r_1+1}(f))^2\big).
\label{eq:asymp_var}
\end{equation}
The objective is to re-write the summand in terms of a difference 
$P_{p,r_1+1}(\tilde{x},\cdot)-P_{p,r_1+1}(\tilde{x}',\cdot)$ and use the mixing conditions to control the Dobrushin coefficient of 
the kernel $P_{p,r_1+1}$; see e.g.~Del Moral et al.~(2012) section 4. To that end, we can only consider the first $r_1-1$ terms,
for which the Dobrushin coeffient will satisfy:
\begin{equation}
\beta(P_{p,r_1+1}) := \sup_{\tilde{x},\tilde{x}'}\|P_{p,r_1+1}(\tilde{x},\cdot)-P_{p,r_1+1}(\tilde{x}',\cdot)\|_{tv} \leq  (1-\rho)^{\lfloor [r_1+1-p]/2 \rfloor} \quad r_1-p \geq 1 \label{eq:dobrushin_bd}
\end{equation}
for some $\rho\in(0,1)$ that does not depend upon $r_1$ and $\|\cdot\|_{tv}$ the total variation distance (again see Del Moral et al.~(2012), as the condition $(\mathcal{M})_2$ of that paper is satisfied). The reminder of the terms in the sum are easily bounded, independently of $r_1$, and we omit these calculations.

By using standard properties of Feynman-Kac formula, we have that each summand in \eqref{eq:asymp_var} is equal to
$$
\eta_p\bigg(\frac{Q_{p,r_1+1}(1)^2}{\eta_p(Q_{p,r_1+1}(1))^2}\frac{\eta_p(Q_{p,r_1+1}(1)[P_{p,r_1+1}(f)(\tilde{x})-P_{p,r_1+1}(f)])^2}{\eta_p(Q_{p,r_1+1}(1))^2}\bigg)
$$
By using Jensen's inequality, it follows that
$$
\eta_p\bigg(\frac{Q_{p,r_1+1}(1)^2}{\eta_p(Q_{p,r_1+1}(1))^2}\frac{\eta_p(Q_{p,r_1+1}(1)[P_{p,r_1+1}(f)(\tilde{x})-P_{p,r_1+1}(f)])^2}{\eta_p(Q_{p,r_1+1}(1))^2}\bigg) 
$$
$$
\leq
\|f\|^2\beta(P_{p,r_1+1})^2 \frac{\eta_p(Q_{p,r_1+1}(1))^2)^2}{\eta_p(Q_{p,r_1+1}(1))^4}.
$$
Using the fact that (see e.g.~section 4 of Del Moral et al.~(2012))
$$
\sup_{\tilde{x},\tilde{x}'} \frac{Q_{p,r_1+1}(1))(\tilde{x})}{Q_{p,r_1+1}(1))(\tilde{x}')} \leq B
$$
for a $B\in(0,+\infty)$ that does not depend on $r_1$ and using the bound  in \eqref{eq:dobrushin_bd} we can conclude.
\end{proof}

\normalsize

\vspace{0.025 in}

{\ \nocite{*} \centerline{ REFERENCES}
\begin{list}{}{\setlength{\itemindent}{-0.3in}}

\item
{\sc Andrieu}, C., {\sc Jasra}, A., {\sc Doucet}, A. \& {\sc Del Moral}, P. (2011). On non-linear Markov chain Monte Carlo. {\it Bernoulli}, {\bf 17}, 987-1014.

\item
{\sc Barndorff-Nielsen}, O. \& {\sc Shephard}, N. (2001). Non-Gaussian
Ornstein-Uhlenbeck models and some of their uses in financial
economics
(with discussion). \emph{J. R. Statist. Soc. Ser. B},
\textbf{63}, 167--241.

\item
{\sc Beskos}, A., {\sc Crisan}, D. \& {\sc Jasra}, A.~(2011). On the stability of sequential Monte Carlo methods in high dimensions. Technical Report, Imperial College London.

\item
{\sc Centanni}, S. \& {\sc Minozzo}, M. (2006a). A Monte Carlo
approach to filtering for a class of marked doubly stochastic Poisson
processes. \emph{J. Amer. Statist. Assoc.}, \textbf{101}, 1582--1597.

\item
{\sc Centanni}, S. \& {\sc Minozzo}, M. (2006b).
Estimation and filtering by reversible jump MCMC for a doubly stochastic Poisson model for ultra-high-frequency financial data. \emph{J. Statist.
Mod.}, \textbf{6}, 97--118.

\item
{\sc Chopin}, N. (2002). A
sequential particle filter for static models. \emph{Biometrika}, \textbf{89}, 539--552.

\item
{\sc Chopin}, N., {\sc Jacob}, P.~\& {\sc Papaspiliopoulos} O. (2011). SMC$^2$: A sequential Monte Carlo algorithm with particle Markov chain Monte Carlo updates.
Technical Report, CREST-ENSAE.

% \item
% {\sc Chopin}, N. (2004). Central limit theorem and its
% application to Bayesian inference. \emph{Ann
% Statist.}, \textbf{32}, 2385--2411.

\item
{\sc Daley}, D. J. \& {\sc Vere-Jones}, D.~(1988). {\it
Introduction to the Theory of Point Processes},
Springer-Verlag: New York.

\item
{\sc Del Moral,} P., (2004). {\it
Feynman-Kac formulae. Genealogical and interacting particle systems},
Springer-Verlag: New York.

\item
{\sc Del Moral,} P., {\sc Doucet}, A. \& {\sc Jasra}, A.~(2006).
Sequential Monte Carlo samplers, {\it J. R. Statist. Soc. B} {\bf 68}, 411-32.

\item
{\sc Del Moral,} P., {\sc Doucet}, A. \& {\sc Jasra}, A.~(2007).
Sequential Monte Carlo for Bayesian computation (with discussion).
\emph{Bayesian Statistics 8}, Ed.~Bayarri, S., Berger, J. O., Bernardo, J. M., Dawid, A. P., Heckerman, D. Smith, A. F. M. and West, M. 115-149,
OUP: Oxford.

\item
{\sc Del Moral,} P., {\sc Doucet}, A. \& {\sc Jasra}, A.~(2012).
On adaptive resampling procedures for sequential Monte Carlo methods, {\it Bernoulli} (to appear).

\item
{\sc Doucet}, A., {\sc Montesano}, L. \& {\sc Jasra}, A.~(2006).
Optimal filtering for partially observed point processes using
trans-dimensional sequential Monte Carlo, {\it ICASSP}.

\item
{\sc Doucet,} A., {\sc De Freitas}, J.~F.~G. \& {\sc Gordon},
N. J.~(2001). \emph{Sequential Monte Carlo Methods in Practice}. Springer:
New York.

\item
{\sc Eberle}, A.  \& {\sc Marinelli}, C.~(2011).
Quantitative approximations of evolving probability
measures and sequential Markov chain Monte Carlo methods.
Technical Report, Universitat Bonn.

\item
{\sc Fearnhead}, P.~(2004).
Exact filtering for partially-observed queues. \emph{Statist.
Comp.}, \textbf{14}, 261--266.

% \bibitem{fearnheadpfpod}
% {\sc Fearnhead}, P. , {\sc Papaspiliopoulos}, O. \& {\sc Roberts}, G. O. % (2008). Particle filters for partially observed diffusions.
% \emph{J. R. Statist. Soc. Ser B}, {\bf 70}, 755-777.

\item
{\sc Glynn,} P. W. \& {\sc Meyn} S. P.~(1996).
A Lyapunov bound for solutions of the Poisson equation. {\it Ann. Prob.}, {\bf 24}, 916--931.

\item
{\sc Green}, P. J. (1995). Reversible jump Markov chain Monte Carlo computation
and Bayesian model determination. \emph{Biometrika}, \textbf{82}, 711--732.
% \bibitem{hsv}
% {\sc Heston}, S. L.~(1993).
% A closed-form solution for options with stochastic volatility with applications % to bond and currency options, {\it Rev Fin} {\bf 6}, 327-343.

\item
{\sc Jasra}, A., {\sc Stephens}, D. A. \& {\sc Holmes}, C. C.~(2007).
On population-based simulation. \emph{Statist. Comp.}, {\bf 17}, 263--279.

\item
{\sc Kantas}, N., {\sc Chopin}, N., {\sc Doucet}, A., {\sc Singh}, S. S.  \& {\sc Maciejowski}, J.~M.~(2011).
On particle methods for parameter estimation in general state-space models.
Technical Report, Imperial College London.

\item
{\sc Liu}, J. S. (2001). \emph{Monte Carlo Strategies in Scientific Computing}.
Springer: New York.

\item
{\sc Pitt}, M. K. \& {\sc Shephard}, N.~(1997).  Filtering via simulation:
Auxiliary particle filters. \emph{J. Amer. Statist. Assoc.}, \textbf{94}, 590--599.

\item
{\sc Roberts}, G. O., {\sc Papaspiliopoulos}, O. \& {\sc Dellaportas}, P. (2004).
Bayesian inference for non-Gaussian Ornstein-Uhlenbeck stochastic
volatility processes. {\it J. R. Statist. Soc. B}, {\bf 66}, 369-393.

\item
{\sc Rousset}, M., \& {\sc Doucet}, A.~(2006).
Discussion of Beskos et al.
\emph{J.~R.~Statist.} \emph{Soc. B}, {\bf 68}, 374--375.

\item
{\sc Rydberg}, T. H., \& {\sc Shephard}, N.~(2000). A modelling
framework for the prices and times of trades made on the New York
Stock exchange. \emph{Non-linear and Non-stationary Signal
Processing}, Ed.~Fitzgerald, W. J., Smith, R. L., Walden, A. T. \&
Young, P. C., 217--246, CUP: Cambridge.

\item
{\sc Shiryaev}, A. (1996). {\it
Probability}, Springer: New York.

\item
{\sc Snyder}, D. L.~(1972).
Filtering and detection for doubly stochastic Poisson processes.
\emph{IEEE Trans. Infor. Th.}, {\bf 18}, 91--102.

\item
{\sc Snyder}, D. L. \& {\sc Miller}, M. I.~(1998). {\it
Random Point Processes in Space and Time},
Springer-Verlag: New York.

\item
{\sc Varini}, E.~(2007).
A Monte Carlo method for filtering a marked doubly stochastic Poisson Process. {\it Stat. Meth. \& Appl.}, \textbf{17}, 183--193.

 \item
 {\sc Whiteley}, N. P., {\sc Johansen}, A.~M.~\& {\sc Godsill}, S. J.~(2011).
 Monte Carlo filtering of piece-wise deterministic processes. 
 {\it J. Comp. Graph. Stat}, {\bf 20}, 119-139.

% \item
% {\sc Zeng}, Y.~(2003). A partially observed model
% for the micromovement of asset prices with Bayes estimation via filtering. %\emph{Math. Finance}, \textbf{13}, 411--444.
% 
% \item
% {\sc Zeng}, Y.~(2004). Estimating stochastic volatility via filtering
% for the micromovement of asset prices. \emph{IEEE Trans. Aut. Contr.}, %\textbf{49}, 338--348.
% 
% 
% \item
% {\sc Zhang}, Z. \& {\sc Liu}, J. S. (2002). A new sequential importance
% sampling method and its application to the two dimensional hydrophobic-hydrophilic
% model. \emph{J. Chem. Phys.}, \textbf{117}, 3492--3498.

\end{list}
}

\end{document}